\newtheorem{definition}{Definition}
\newtheorem{theorem}{Theorem}
\newtheorem{lemma}{Lemma}
\newcommand{\hollowstar}{\text{\ding{73}}}
\newcommand{\nnreals}{\mathbb{R}_{\ge0}}
\renewcommand{\epsilon}{\varepsilon}
\newcommand{\calG}{\mathcal{G}}
\newcommand{\calM}{\mathcal{M}\hspace{0.1mm}}
\newcommand{\calY}{\mathcal{Y}}
\newcommand{\E}{\operatorname{\mathbb{E}}}
\newcommand{\V}{\operatorname{Var}}
\newcommand{\bma}{\mathbf{a}}
\def\BibTeX{{\rm B\kern-.05em{\sc i\kern-.025em b}\kern-.08em
    T\kern-.1667em\lower.7ex\hbox{E}\kern-.125emX}}
\begin{document}

\title{Graph Analysis in Decentralized Online Social Networks with Fine-Grained Privacy Protection}

\author{\IEEEauthorblockN{Lele Zheng\IEEEauthorrefmark{2}, 
Bowen Deng\IEEEauthorrefmark{2},
Tao Zhang\IEEEauthorrefmark{2}, Yulong Shen\IEEEauthorrefmark{2} and Yang Cao\IEEEauthorrefmark{4}}
\\

}

\maketitle

\begin{abstract}
Graph analysts cannot directly obtain the global structure in decentralized social networks, and analyzing such a network requires collecting local views of the social graph from individual users. Since the edges between users may reveal sensitive social interactions in the local view, applying differential privacy in the data collection process is often desirable, which provides strong and rigorous privacy guarantees. In practical decentralized social graphs, different edges have different privacy requirements due to the distinct sensitivity levels. However, the existing differentially private analysis of social graphs provide the same protection for all edges. To address this issue, this work proposes a fine-grained privacy notion as well as novel algorithms for private graph analysis. We first design a fine-grained relationship differential privacy (FGR-DP) notion for social graph analysis, which enforces different protections for the edges with distinct privacy requirements. Then, we design algorithms for triangle counting and $k$-stars counting, respectively, which can accurately estimate subgraph counts given fine-grained protection for social edges. We also analyze upper bounds on the estimation error, including $k$-stars and triangle counts, and show their superior performance compared with the state-of-the-arts. Finally, we perform extensive experiments on two real social graph datasets and demonstrate that the proposed mechanisms satisfying FGR-DP have better utility than the state-of-the-art mechanisms due to the finer-grained protection.
\end{abstract}

\begin{IEEEkeywords}
subgraph counting, local differential privacy, fine-grained protection
\end{IEEEkeywords}

\section{Introduction}\label{Introduction}
,
Decentralized Online Social Networks (DOSNs) \cite{DOSNS,paul2014survey,dosn2010} have recently received increasing attention because of the more control given to them over their shared content. As one of the most basic data patterns in DOSNs, social graph contain a wealth of valuable knowledge to uncover and thus analyzing social graph becomes a hot topic in recent years. As one of the most fundamental tasks in social graph analysis, counting subgraphs (e.g., triangles, stars) can be used to analyze the connection patterns in various social graphs, where the whole graph consists of different users’ local views. These subgraphs play an essential role in the social recommendation and constructing graph models.

A distcintive characteristic of DOSNs is that the data analysts often cannot obtain the entire social graph. Instead, analysts need to communicate with individual participants of the network, each with a limited local view of the entire social graph. However, users’ local views contain a lot of sensitive information since the edges usually reflect sensitive social interactions between individuals. Therefore, the analysis of the social graph must ensure strict privacy guarantees. Differential privacy \cite{Laplace}, as a privacy protection model with rigorous privacy guarantee, has become the gold standard for privacy analysis. However, the assumption that the server must be trusted makes it unsuitable for distributed online social networks. Local differential privacy (LDP) \cite{LDP} is a variant of differential privacy that allows each user to perturb her graph metrics locally before sending them to data analysts. Thus, it can be applied to decentralized online social networks. Several works have demonstrated the potential of LDP for private graph analysis, such as degree distribution, triangle counting, $k$-start.

\setcounter{subfigure}{0}
\begin{figure}[t]
	\centering
	\subfigbottomskip=16pt 
	\subfigcapskip=-5pt 
	\subfigure[Previous privacy model]{
		\includegraphics[width=0.4\linewidth]{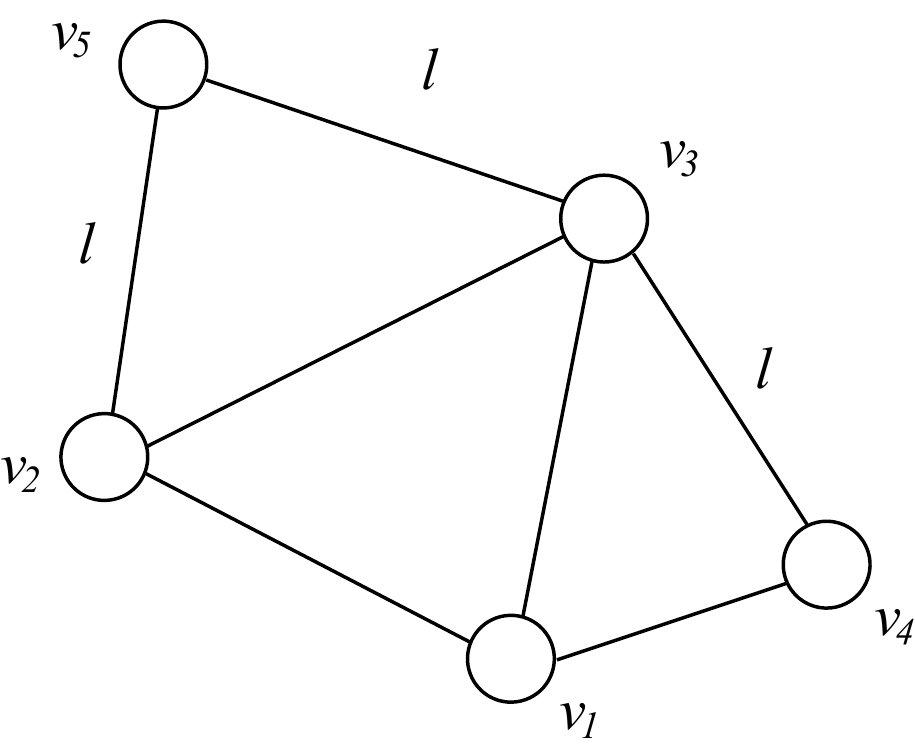}}
	\subfigure[Fine-grained privacy model]{
		\includegraphics[width=0.4\linewidth]{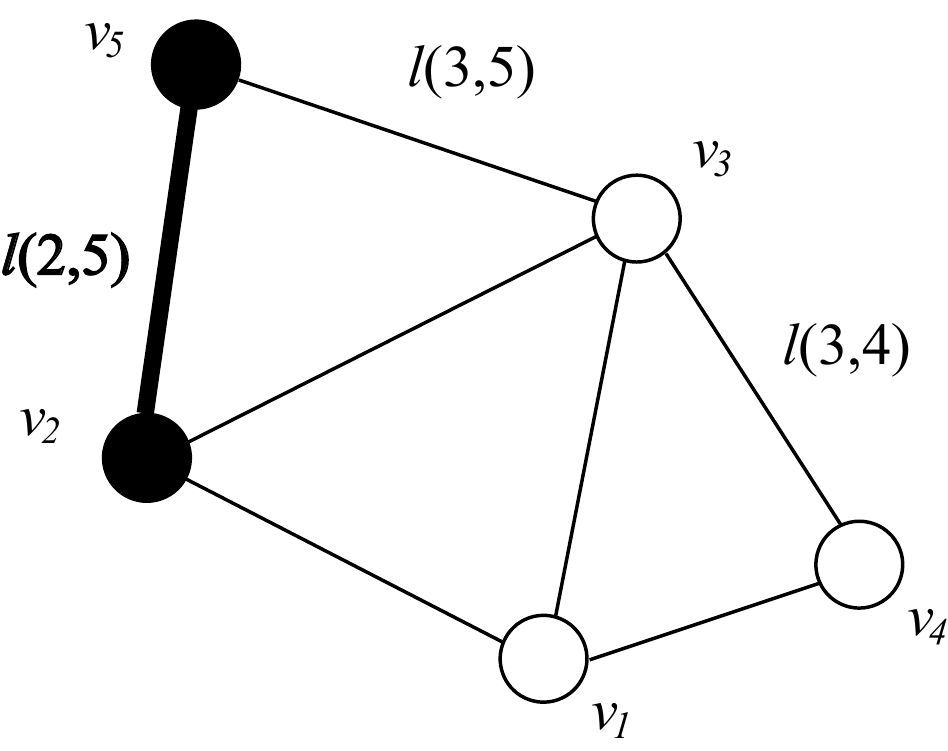}}
	\caption{(a) Previous works provide the same  privacy protection level $l$ for all social edges. (b) The proposed framework provides an appropriate privacy protection level for each edge, where the bold edge $\left \langle v_2, v_5 \right \rangle$ requires a higher privacy level.}
\end{figure}

The above differential privacy graph analysis mechanisms assume that all social edges are equally sensitive (controlled by the same privacy budget $\varepsilon$). Due to the uniform privacy budget, existing mechanisms would perturb the data in the same way (or add the same amount of noise) for all local views. However, in a real decentralized social network scenario, different edges may have different degrees of sensitivity and thus require different levels of privacy. For instance, social networks have a lot of different groups. Since the social interactions between the core nodes of the group often have a more decisive influence, they are more sensitive relative to the average users. The edges between these nodes require stronger privacy guarantees. A unified protection strategy will not only overprotect the unimportant edges of the social graph, reducing the utility of graph analysis but also cause issues such as insufficient protection of the core edges. Unfortunately, to our knowledge, fine-grained social edge protection is not considered by any existing differential privacy graph analysis mechanisms.

Motivated by the above observation, we consider differentially private analysis of social edges in decentralized social networks and assume that all input universes have multiple privacy levels represented by different values of privacy budgets. In practice, the privacy level of the edges between nodes can be classified by their influence. An edge between two core nodes can be classified as a strict privacy level, while an edge between two average nodes can be classified as a relaxed one. Since each possible edge $\left \langle v_i, v_j \right \rangle$ in graph $G$ has a privacy budget $\varepsilon_{l(i,j)}$ (edges with the same privacy level have the same privacy budget), the privacy budget of existing methods should be $\varepsilon = {\rm min}_{\left \langle v_i, v_j \right \rangle \in G}  \{\varepsilon_{l(i,j)}\}$ to satisfy the privacy requirements of all edges, which would provide overprotection for some inputs that do not require such strong privacy and lead to poorly estimated utility. What’s more, each subgraph is not independent in the social graph since different subgraphs may contain the same edge. There may also be complex interdependencies involving multiple people in the graph. For example, each node cannot estimate the triangle count of the local view because she cannot see the edges between other nodes; e.g., node $v_3$ cannot see an edge between $v_2$ and $v_5$ in Fig. 1.

This paper aims to consider fine-grained privacy protection for graph data. We first propose a privacy definition called FGR-DP in the local setting, which can provide different protections for edges with varying levels of privacy. We focus on triangle counts and $k$-stars counts - the most basic and valuable subgraphs counting tasks.
Specifically, our main contributions are summarized as follows:
\begin{enumerate}
    \item To satisfy the different privacy demands of different edges in decentralized social networks, we develop a practical privacy definition called fine-grained relationship DP (FGR-DP), which can provide fine-grained privacy protection for different edges of the social graph.
    \item Under fine-grained relationship DP, we propose an intuitive and efficient algorithm for $k$-stars count collection using the Laplace mechanism, which can achieve higher accuracy while satisfying the privacy requirement of edges.
    \item For triangle counting, a sophisticated two-phase algorithm with a solid privacy analysis is designed, where the node perturbs each neighboring edge independently according to the specific demands of privacy preservation, under the stringent FGR-DP notion. 
    \item We conduct extensive experiments over two real social graphs. The results show that the proposed technique consistently outperforms existing solutions in terms of result accuracy.
\end{enumerate}

The rest of the paper is organized as follows. Section \ref{Related Work} introduces the related work. Section \ref{Problem Statement and Preliminaries} describes local differential privacy and its application in graph analytics, as well as the system model. Section \ref{FG-RDP} presents the notion of FGR-DP. Section \ref{Subgraph Counting Mechanism} describes the proposed privacy-preserving subgraph counting algorithms in detail. Section \ref{Experiments} shows the experimental results. Finally, Section \ref{Conclusions} concludes the paper.

\section{Related Work}\label{Related Work}
As a gold standard, LDP has been considered by researchers for distributed graph analysis in recent years. Here we review some representative works, especially recent research with LDP of graph statistics. 

\textbf{Graph analysis with LDP.} Privacy-preserving subgraph counting plays an important role in graph analysis in decentralized social networks and has been addressed by several related works with different approaches \cite{sun2019analyzing}, \cite{liu2022collecting}, \cite{ye2020lf}, \cite{usenixcommunication}. Sun et al. \cite{sun2019analyzing} first consider this problem and propose DDP, a relaxation of LDP, which requires each user to consider the privacy of herself and her neighbors simultaneously, protecting the edge privacy of correlated data to a certain extent. Unfortunately, since DDP only hides one edge of the global, it provides a weak privacy guarantee. To address this challenge, Liu et al.\cite{liu2022collecting} define Edge-RLDP, which can provide a strong privacy guarantee when considering data correlation. Based on this, they propose a two-stage framework to achieve better estimation utility with strict privacy guarantees. Ye et al. \cite{ye2020lf} provide a generic graph metric estimation framework to support LDP graph analysis, called LF-GDPR, which simplifies developing a practical LDP solution for a graph analysis task by providing a complete solution for all LDP steps. Unfortunately, these LDP graph analysis methods assume that all edges are equally sensitive and cannot provide fine-grained protection for social graphs.

\textbf{Fine-grained LDP.}
The standard LDP assumes that all data are equally sensitive, resulting in excessive perturbation for some inputs and low utility. In reality, not all personal information should be treated equally  \cite{PLDPhistogram} \cite{PLDP} \cite{ULDP}  \cite{meanLDP}  \cite{IDLDP}. Nie et al.\cite{PLDPhistogram} consider the different demands of users and propose PLDP, which does not impose global privacy constraints on all users but instead follows each user's individual privacy requirements. Under PLDP, each participant can select the privacy level according to her preference. Murakami et al.\cite{ULDP} believe that some data may not require protection due to their inherent characteristics. For example, in a survey on exam cheating, "No" is naturally a non-sensitive response to this question. Therefore, they consider the inputs’ sensitivity level by directly classifying them as sensitive and non-sensitive. Gu et al. \cite{IDLDP} further demonstrate that different data have varying degrees of sensitivity. As a result, they present Input-Discriminative LDP (ID-LDP), a finer-grained variant of LDP for input data. However, none of these LDP variants are suitable for graph data analysis.

\section{Problem Statement and Preliminaries}\label{Problem Statement and Preliminaries}
\subsection{Notations}
\textbf{Graph.} 
 An undirected graph is defined as $G=(V,E)$, where $V$ is the set of nodes (i.e., users) and $E$ is the set of edges. Let $n$ be the number of nodes in $V$, and let $v_i \in V$ the $i$-th node; i.e., $V = \{v_1,v_2,\cdots,v_n\}$. An edge $\left\langle v_i, v_j\right\rangle \in E$ represents a relationship between nodes $v_i \in V$ and $v_j \in V$. The number of edges connected to a node is called the degree of the node. Let $d_{max}$ be the maximum degree (i.e., maximum number of edges connected to a node) in graph $G$.  A graph $G$ can be represented as a symmetric adjacency matrix $\mathbf{A}=(a_{i,j}\in \{ 0,1\}^{n \times n})$, where $a_{i,j}=1$ if and only if $\left\langle v_i, v_j\right\rangle \in E$ and otherwise $a_{i,j}=0$. The main symbols used in this paper are given in Table I.

\begin{definition}[\textbf{Neighboring graphs \cite{imola2021locally}}] 
    Given two graphs $G=(V, E)$ and $G'=(V', E')$, for any edge $\left\langle v_i, v_j\right\rangle \in E$, if $V' = V$ and $E'=E- \{\left\langle v_i, v_j\right\rangle\}$, then $G$ and $G'$ are neighboring graphs.
\end{definition}

\begin{definition}[\textbf{Local Differential Privacy (LDP)\cite{LDP}}]
    For a given $\varepsilon > 0$, a randomized algorithm $\calM$ satisfies $\varepsilon$-LDP if and only if for any pair of inputs $x, x'$ and any output $y$
    \begin{equation}
        \dfrac{Pr(\calM(x)=y)}{Pr(\calM(x')=y)} \leqslant e^\varepsilon
    \end{equation}
    where $\varepsilon$ is a parameter called privacy budget that controls the strength of privacy protection. A smaller $\varepsilon$ guarantees stronger privacy protection because the adversary has lower confidence when trying to distinguish any pair of inputs $x, x'$.
\end{definition}

The most widely employed mechanism for achieving differential privacy is Laplace mechanism.

\begin{definition}[\textbf{Laplace Mechanism\cite{Laplace}}] 
    let $f: G \rightarrow \mathcal{Y}$, the Laplace Mechanism is defined as 
\begin{equation}
    \mathcal{M}(G)=f(G)+  Lap(\Delta f/\varepsilon) 
\end{equation}
where $f(\cdot)$ in graph estimation is the subgraph count query, $\Delta f=\max\limits_{G, G'}|| f(G)-f(G') ||_1$ is the sensitivity. $Lap(\lambda)$ is a zero-mean Laplace distribution with scale $\lambda$, where $Lap(x|\lambda)=\frac{1}{2\lambda}exp(-\frac{|x|}{\lambda})$, and its variance is $(\Delta f/\varepsilon)^2$.
\end{definition}

\begin{table}\label{table-1}
\caption{List of Symbols}
\center
\begin{tabular}{l|l}\hline 
\toprule 
Symbol & Description \\ 
\midrule 
$U$                   & the set of users\\
$n$                   & Number of users\\
$G=(V,E)$             & Graph with nodes(users) $V$ and edges $E$ \\
$\calG$               & Set of possible graphs with $n$ users\\
$v_i$                 & $i$-th user in $V$\\
$f_\bigtriangleup(G)$ & Number of triangles in $G$ \\
$f_\hollowstar(G)$    & Number of $k$-stars in $G$ \\
$L$                   & Number of privacy levels\\
$l(i,j)$              & Privacy level of edge $\left\langle v_i, v_j\right\rangle$\\
$\varepsilon_l$       & Privacy budget for privacy level $l$ \\
$d_{max}$, $\widetilde{d}_{max}$ & True vs. estimated maximum degree\\
$\mathbf{A}=(a_{i,j})$    & Adjacent matrix\\
$\bma_i$               & $i$-th row of $\mathbf{A}$ (i.e., Neighbors of $v_i$)\\  
\bottomrule 
\end{tabular}
\end{table}

\begin{definition}[\textbf{Randomized Response\cite{RR}}]
Random Response (RR)  can protect the sensitive Boolean responses of participating users in a survey. Specifically, each participant gives the true answer with probability $p$ and the opposite answer with probability $1-p$. To satisfy $\varepsilon$-LDP, the probability is selected as $p = \dfrac{e^\varepsilon}{1+e^{\varepsilon}}$.
\end{definition}

\begin{definition}[\textbf{$\varepsilon$-relationship DP \cite{imola2021locally}}] 
  Let $\epsilon \in \nnreals$. For \mbox{}$1 \leq i \leq n$, let $\calM_i$ be an obfuscated mechanism of user $u_i$ that takes $\bma_i$ as input. We say $(\calM_1, \cdots, \calM_n)$ provides \emph{$\varepsilon$-relationship DP} if for any two neighboring graphs $G, G' \in \calG$ that differ in one edge and  any $(\calY_1, \cdots, \calY_n) \in \mathrm{Range}(\calM_1) \times \cdots \times \mathrm{Range}(\calM_n)$, 
\begin{align}
  &\Pr[(\calM_1(\bma_1), \cdots, \calM_n(\bma_n)) = (\calY_1, \cdots, \calY_n)] \nonumber\\
  &\leq e^\epsilon \Pr[(\calM_1(\bma'_1), \cdots, \calM_n(\bma'_n)) = (\calY_1, \cdots, \calY_n)],
\label{eq: relation_LDP}
\end{align}

  where $\bma_i$ (resp. $\bma_i'$) $\in \{0,1\}^n$ is the $i$-th row of the adjacency matrix of graph $G$ (resp. $G'$).
\end{definition}
As our work focuses on fine-grained privacy protection of edges for decentralized social graphs, we further define fine-grained relationship DP in section \ref{FG-RDP}.

\begin{theorem}[\textbf{Sequential Composition of LDP\cite{mcsherry2009privacy}}]\label{Sequential Composition}
    If randomized algorithm $\mathcal{M}_i: \mathcal{X} \rightarrow \mathrm{Range}(\mathcal{M}_i)$ satisfies $\varepsilon_i$-LDP for $i=1, 2, \cdots, n$, then the sequential combination of these algorithms $\mathcal{M}_i (1\leq i \leq n)$ satisfies $(\sum \varepsilon_i)$-LDP.
\end{theorem}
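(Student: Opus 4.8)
The plan is to unwind the definition of $\varepsilon$-LDP for the composed mechanism $x \mapsto (\mathcal{M}_1(x), \dots, \mathcal{M}_n(x))$ and reduce the joint probability ratio to a product of the per-mechanism ratios. First I would fix an arbitrary pair of inputs $x, x' \in \mathcal{X}$ and an arbitrary output tuple $(y_1, \dots, y_n) \in \mathrm{Range}(\mathcal{M}_1) \times \cdots \times \mathrm{Range}(\mathcal{M}_n)$; the goal is to establish
\begin{equation}
\begin{aligned}
&\Pr[(\mathcal{M}_1(x), \dots, \mathcal{M}_n(x)) = (y_1, \dots, y_n)] \\
&\quad \le e^{\sum_i \varepsilon_i}\,\Pr[(\mathcal{M}_1(x'), \dots, \mathcal{M}_n(x')) = (y_1, \dots, y_n)].
\end{aligned}
\end{equation}

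Next, since the mechanisms $\mathcal{M}_1, \dots, \mathcal{M}_n$ use independent internal randomness, the joint output probability factorizes as $\prod_{i=1}^{n}\Pr[\mathcal{M}_i(x) = y_i]$. Then I would apply the $\varepsilon_i$-LDP guarantee of each $\mathcal{M}_i$ to the $i$-th factor, giving $\Pr[\mathcal{M}_i(x) = y_i] \le e^{\varepsilon_i}\Pr[\mathcal{M}_i(x') = y_i]$, and multiply these $n$ inequalities together. The right-hand side collects into $e^{\sum_i \varepsilon_i}$ times $\prod_{i=1}^{n}\Pr[\mathcal{M}_i(x') = y_i]$, which by the same independence is exactly $e^{\sum_i \varepsilon_i}$ times the joint probability of the tuple under input $x'$. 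Since $x$, $x'$, and the output tuple were arbitrary, this is precisely the statement that the composed mechanism satisfies $(\sum_i \varepsilon_i)$-LDP.

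The only subtle points — and where I would spend the care — are (i) extending the argument from discrete ranges to general (e.g. continuous) output spaces, which is handled by replacing point masses with arbitrary measurable sets $S_i \subseteq \mathrm{Range}(\mathcal{M}_i)$ and their product $S_1 \times \cdots \times S_n$, after which the factorization and the bound go through verbatim; and (ii) if one wants the adaptive version in which $\mathcal{M}_i$ may additionally take the earlier outputs $y_1, \dots, y_{i-1}$ as auxiliary input, in which case the clean product factorization is replaced by a telescoping chain of conditional probabilities $\Pr[\mathcal{M}_i(x) = y_i \mid y_1, \dots, y_{i-1}]$, and one invokes the fact that $\mathcal{M}_i(\,\cdot\,;\,y_1,\dots,y_{i-1})$ is still $\varepsilon_i$-LDP for every fixed value of the auxiliary input. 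For the non-adaptive statement as written, independence alone suffices, and the ``hard part'' is essentially just the bookkeeping of keeping the product of probabilities and the sum of budgets in lockstep.
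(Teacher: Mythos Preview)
Your proposal is correct and is exactly the standard argument for sequential composition of LDP. Note, however, that the paper does not actually give its own proof of this theorem: it is stated as a cited result from McSherry~\cite{mcsherry2009privacy} and used as a black box in the subsequent privacy analyses (Theorems~\ref{thm:star} and~\ref{thm:FGT}). So there is nothing in the paper to compare against beyond the bare statement, and your write-up would serve perfectly well as the omitted proof.
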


\subsection{Problem Statement}
\textbf{System Model.} 
Our system model involves a data analyst and $n$ individual users $U = \{u_1, u_2, \cdots, u_n\}$. Each user has a limited local view $G_i=(V_i, E_i)$ of the global social graph $G$ and individually obfuscates sensitive data employing a random perturbation algorithm before sending it to the data analyst.
Then, the data analyst combines information from different users to evaluate the whole social network properties. We consider two types of most basic and useful subgraph counts, and one is the triangle counts $f_\bigtriangleup(G)=|\{ \forall (i<j<k)|(v_i,v_j,v_k\in V)\wedge(\left\langle v_i, v_j\right\rangle,\left\langle v_j, v_k\right\rangle,\left\langle v_j, v_k\right\rangle\in E)\}|$, where a triangle is a set of three nodes with three edges; the other is the $k$-stars counts $f_\hollowstar(G)=|\{ \forall (i,j_1<j_2<...<j_k)|(v_i,v_{j_1}, ... ,v_{j_k}\in V)\wedge(\left\langle v_i, v_{j_1}\right\rangle, ... ,\left\langle v_i, v_{j_k}\right\rangle\in E)\}|$, where a $k$-stars consists of a central node connected to $k$ other nodes. Counting them is an essential task in analyzing graph properties. For example, the data analyst can calculate clustering coefficients based on these two types of subgraph counts. We employ differential privacy to protect users' sensitive social interaction information. Assume there are $L$ privacy levels, and the privacy level of edge $\left\langle v_i, v_j\right\rangle$ is $l(i,j)$. Though the whole social graph $G$ can be large, the number of privacy levels determined by influence is usually small in practice (and usually only two levels). For convenience, we define the privacy budget of any edge $\left\langle v_i, v_j\right\rangle$ as $\varepsilon_{l(i,j)}$.

\section{Fine-grained Relationship DP}\label{FG-RDP}

In this section, we introduce a novel privacy concept, fine-grained relationship DP, which protects the existence of arbitrary edges in LDP graphs. In fine-grained relationship DP, the sensitivity of an edge $\left\langle v_i, v_j\right\rangle$ is determined by its two endpoints $v_i$ and $v_j$. Meanwhile, we formally analyze the requirements for implementing the FGR-DP. 

\subsection{Definition}
LDP defines privacy as the highest level of indistinguishability between any two adjacent graph data. In real-world applications, different edges may have different privacy levels. As a result, the indistinguishability requirements between different adjacent graphs may be distinct. However, LDP cannot provide such fine-grained privacy protection because its definition is based on the worst-case scenario. This uniform definition would lead to numerous drawbacks, such as data overprotection and low data utility. Intuitively, since less noise can be added to low-sensitive edges, providing fine-grained privacy protection for different adjacency graphs can improve the utility of subgraph counting. We describe the new notion of fine-grained relationship DP as follows.

\begin{definition}[Fine-grained Relationship DP] \label{def:Fine_DP} 
For \mbox{$1 \leq i \leq n$}, let $\calM_i$ be an obfuscated mechanism of user $v_i$ that
takes $\bma_i$ as input. We say $(\calM_1, \cdots, \calM_n)$ provides 
\emph{fine-grained relationship DP}
if for any two neighboring graphs $G, G^\prime \in \calG$ that differ in edge 
$\left\langle v_i, v_j\right\rangle$ and any $(\calY_1, \cdots, \calY_n) \in \mathrm{Range}(\calM_1) \times \cdots \times \mathrm{Range}(\calM_n)$, 
\begin{align}
  &\Pr[(\calM_1(\bma_1), \cdots, \calM_n(\bma_n)) = (\calY_1, \cdots, \calY_n)] \nonumber\\
  &\leq e^{\varepsilon_{l(i,j)}} \Pr[(\calM_1(\bma'_1), \cdots, \calM_n(\bma'_n)) = (\calY_1, \cdots, \calY_n)],
\end{align}
  where $\varepsilon_{l(i,j)}$ denotes the privacy budget of edge $\left \langle v_i, v_j \right \rangle$.
\end{definition}

We assume that each edge $\left \langle v_i, v_j \right \rangle$ has a specific privacy level $l(i,j)$, and its corresponding privacy budget is $\varepsilon_{l(i,j)}$.
Intuitively, in Definition~\ref{def:Fine_DP}, the existence of each edge $\left \langle v_i, v_j \right \rangle$ is protected by $\varepsilon_{l(i,j)}$-LDP.
Theoretically, FGR-DP can be employed to perform a variety of graph data analysis tasks. This paper focuses on the privacy-preserving subgraph counting algorithm, under FGR-DP.

\textbf{Example.} Consider a scenario in which a data analyst collects subgraph (triangles and $k$-stars) counts from a decentralized social network to discover the clustering coefficient of the whole social graph $G=(V, E)$. The analyst needs to interact with $n$ independent users (nodes) and ask each user to return obfuscated output from the local social graph, where the privacy level $l(i, j)$ of the edge $\left \langle v_i, v_j \right \rangle$ is jointly determined by the influence of $v_i$ and $v_j$. Since the relationship $\left \langle v_i, v_j \right \rangle$ between two influential nodes (e.g., a celebrity and a government official) is more sensitive than the other relationship $\left \langle v_{i^\prime}, v_{j^\prime} \right \rangle$, the privacy budget $\varepsilon_{l(i,j)} < \varepsilon_{l(i^\prime,j^\prime)}$, where a smaller $\varepsilon$ indicates a higher privacy level that requires stronger privacy protection. Under FGR-DP, the edges with low sensitivity only need a small amount of noise, and such edges constitute the majority (usually more than 90\%) in decentralized social graphs.

\subsection{Implementation.}
In the process of implementation, the privacy level should satisfy the following properties:
\begin{lemma}[Symmetry]
For any $v_i$, $v_j\in V$ and $i\neq j$,
\begin{equation}
    l(i,j)=l(j,i).
\end{equation}
\end{lemma}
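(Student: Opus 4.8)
The plan is to obtain the symmetry of $l(\cdot,\cdot)$ directly from the fact that the social graph is undirected: the privacy level is a label attached to an \emph{edge}, and in an undirected graph the edge joining $v_i$ and $v_j$ is a single object, independent of the order in which its endpoints are listed. There is no real technical content here; the argument is an unwinding of definitions, and the one thing to be careful about is avoiding circularity — one must not simply assert that ``the privacy level of an edge is well defined'', since that well-definedness is precisely what the lemma asserts.

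First I would recall from the Notations subsection that $G=(V,E)$ is undirected with a symmetric adjacency matrix $\mathbf{A}=(a_{i,j})$, so that $\left\langle v_i,v_j\right\rangle\in E \Leftrightarrow a_{i,j}=1 \Leftrightarrow a_{j,i}=1 \Leftrightarrow \left\langle v_j,v_i\right\rangle\in E$; in particular $\left\langle v_i,v_j\right\rangle$ and $\left\langle v_j,v_i\right\rangle$ are two names for the same element of $E$, and by Definition~1 deleting it yields one and the same neighboring graph $G'$. The privacy level is introduced as a labeling of edges (each edge $\left\langle v_i,v_j\right\rangle$ carries a privacy level, with edges of the same level sharing a budget). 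Hence $l(i,j)$, being by definition the level of the edge $\left\langle v_i,v_j\right\rangle$, coincides with $l(j,i)$, the level of the edge $\left\langle v_j,v_i\right\rangle$, because these are the same edge; this is exactly the claimed identity $l(i,j)=l(j,i)$.

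As a consistency check I would then read Definition~\ref{def:Fine_DP}: the unordered neighboring pair $\{G,G'\}$ enters the quantifier both as ``differing in edge $\left\langle v_i,v_j\right\rangle$'' (yielding the bound $e^{\varepsilon_{l(i,j)}}$) and as ``differing in edge $\left\langle v_j,v_i\right\rangle$'' (yielding $e^{\varepsilon_{l(j,i)}}$); for the intended reading that the existence of that edge is protected by a single budget $\varepsilon_{l(i,j)}$-LDP, these must agree, so $\varepsilon_{l(i,j)}=\varepsilon_{l(j,i)}$, in line with the previous paragraph. An equivalent operational remark I might include is that users $v_i$ and $v_j$ perturb the bits $a_{i,j}$ and $a_{j,i}$, which encode the \emph{same} relationship, so any coherent assignment of a noise level to that relationship must give them the same target, i.e.\ $l(i,j)=l(j,i)$. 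The anticipated ``main obstacle'' is thus purely expository: making the chain of identifications precise while keeping it non-circular.
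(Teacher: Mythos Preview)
Your argument is sound, but note that the paper does not actually prove this lemma at all: the three lemmas (Symmetry, Transferability, Ordering) are introduced with the sentence ``In the process of implementation, the privacy level should satisfy the following properties,'' and are then simply listed without proof. They function as axioms imposed on any admissible assignment $l(\cdot,\cdot)$ rather than as consequences derived from earlier definitions.

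Your derivation---that $l(i,j)$ is defined as the level of the edge $\left\langle v_i,v_j\right\rangle$, and that in an undirected graph $\left\langle v_i,v_j\right\rangle$ and $\left\langle v_j,v_i\right\rangle$ denote the same element of $E$---is the natural justification and in fact supplies more argument than the paper itself gives. The consistency check against Definition~\ref{def:Fine_DP} is a nice touch but not needed; the first paragraph alone suffices.
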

\begin{lemma}[Transferability]
For any $v_i$, $v_j$, $v_{i^\prime}$,$v_{j^\prime}$,$v_{i^{\prime\prime}}$,$v_{j^{\prime\prime}} \in V$, if $l(i,j)<l(i^\prime,j^\prime)$ and $l(i^\prime,j^\prime)<l(i^{\prime\prime},j^{\prime\prime})$, then
\begin{equation}
    l(i,j)<l(i^{\prime\prime},j^{\prime\prime})
\end{equation}
\end{lemma}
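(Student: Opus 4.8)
The plan is to reduce the statement to the transitivity of the ordinary order on the integers. By the system model of Section~\ref{Problem Statement and Preliminaries}, there are $L$ privacy levels, and every edge $\langle v_a, v_b\rangle$ carries a level $l(a,b)$ drawn from the finite index set $\{1,\dots,L\}$; the relation ``$<$'' appearing in the lemma is exactly the linear order of $\nats$ restricted to this set. Under this reading the lemma asserts nothing beyond the fact that strict inequality of integers is transitive, so the ``proof'' is really just a bookkeeping of that fact.

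Concretely, I would fix the three edges $\langle v_i,v_j\rangle$, $\langle v_{i'},v_{j'}\rangle$, $\langle v_{i''},v_{j''}\rangle$ and abbreviate $a=l(i,j)$, $b=l(i',j')$, $c=l(i'',j'')$, so that $a,b,c\in\{1,\dots,L\}\subseteq\nats$. The hypotheses read $a<b$ and $b<c$. Since the levels are integers, $a<b$ is equivalent to $b-a\ge 1$ and $b<c$ to $c-b\ge 1$; adding these gives $c-a=(c-b)+(b-a)\ge 2>0$, i.e.\ $a<c$, and translating back yields $l(i,j)<l(i'',j'')$, as claimed. No case analysis, estimation, or appeal to the structure of the graph is needed.

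The only point deserving any attention is a modelling convention rather than a computation: one must make explicit that the $L$ privacy levels form a chain ordered by strictness — a smaller budget $\varepsilon_l$ corresponding to a higher level $l$ — so that comparisons such as $l(i,j)<l(i',j')$ are well defined in the first place. This is the real ``obstacle,'' and it is purely definitional; once the convention is in place the lemma is immediate. Together with the Symmetry lemma it simply records that $l$ induces a well-behaved (total pre)order on the set of possible edges, which is precisely the structure the later mechanisms exploit when they partition edges by level and assign each class its own budget $\varepsilon_l$.
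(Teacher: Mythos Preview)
Your argument is correct: once the privacy levels are integers in $\{1,\dots,L\}$, the lemma is nothing but transitivity of $<$ on $\nats$, and your reduction to $c-a=(c-b)+(b-a)\ge 2$ is a perfectly valid (if slightly over-detailed) justification.

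For comparison, the paper itself gives \emph{no} proof of this lemma at all. In context (Section~\ref{FG-RDP}), the three lemmas Symmetry, Transferability, and Ordering are introduced with the sentence ``the privacy level should satisfy the following properties,'' i.e.\ they are stated as design requirements on the level function $l(\cdot,\cdot)$ rather than as theorems derived from prior assumptions. Your observation that the statement is ``purely definitional'' is exactly right and matches the paper's treatment; you have simply made explicit the one-line verification that the paper leaves implicit.
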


\begin{lemma}[Ordering]
For any $v_i$, $v_j$, $v_{i^\prime}$, $v_{j^\prime}\in V$, if $l(i,j)<l(i^\prime,j^\prime)$, then
\begin{equation}
    \varepsilon_{l(i,j)}<\varepsilon_{l(i^\prime,j^\prime)}
\end{equation}
\end{lemma}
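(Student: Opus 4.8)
The plan is to read the three displayed ``lemmas'' as well-formedness conditions on the two maps $(i,j)\mapsto l(i,j)$ (edge-to-level) and $l\mapsto\varepsilon_l$ (level-to-budget), and to discharge each from structure already fixed in Sections~\ref{Problem Statement and Preliminaries}--\ref{FG-RDP}. Symmetry and Transferability are immediate and I would dispose of them first: since $G=(V,E)$ is undirected, $\langle v_i,v_j\rangle$ and $\langle v_j,v_i\rangle$ name the same edge and the privacy level is a property of that edge (it is ``jointly determined by the influence of $v_i$ and $v_j$''), so $l(i,j)=l(j,i)$; and since the levels take values in the totally ordered set $\{1,\dots,L\}$ (equivalently in $\reals$), the relation ``$<$ on privacy levels'' inherits transitivity from the usual order on $\reals$, giving Transferability by chaining.

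The only statement with real content is Ordering, and here I would argue from the semantics that Definition~\ref{def:Fine_DP} attaches to the levels. In FGR-DP the existence of edge $\langle v_i,v_j\rangle$ is protected by the multiplicative factor $e^{\varepsilon_{l(i,j)}}$, and the level index is, by design, a sensitivity ranking in which a strictly smaller index encodes a strictly stronger indistinguishability demand (the more influential the endpoints, the lower the level and the stronger the required protection). Since a tighter indistinguishability requirement is exactly a smaller privacy budget, the map $l\mapsto\varepsilon_l$ must be strictly increasing. To make this precise I would argue by contradiction: if $l(i,j)<l(i',j')$ but $\varepsilon_{l(i,j)}\ge\varepsilon_{l(i',j')}$, then the lower-level edge $\langle v_i,v_j\rangle$, although deemed the more sensitive of the two, would be afforded protection no stronger than that granted to the less sensitive edge $\langle v_{i'},v_{j'}\rangle$; this contradicts the defining purpose of maintaining two distinct levels (operationally it collapses level $l(i,j)$ into level $l(i',j')$), so $\varepsilon_{l(i,j)}<\varepsilon_{l(i',j')}$.

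The main obstacle is that Ordering is ultimately a modelling statement rather than a consequence of the probability inequalities alone: nothing in the LDP machinery forces $\varepsilon_l$ to be monotone in $l$ unless the monotone ``levels $=$ rank order of sensitivity'' reading is itself part of the setup. So the delicate part of the write-up is to state that reading explicitly enough that the contradiction argument above is not circular. The clean fallback, which I would flag, is simply to bake Ordering into the definition of ``privacy level'' --- declare $\varepsilon_1<\varepsilon_2<\dots<\varepsilon_L$ and let $l(i,j)$ be the index of the requirement imposed on $\langle v_i,v_j\rangle$ --- after which the lemma holds by construction and Symmetry/Transferability are the only things left to check.
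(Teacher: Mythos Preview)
Your diagnosis is correct: the paper does not prove this lemma at all. In Section~\ref{FG-RDP} the three lemmas are introduced with the sentence ``the privacy level should satisfy the following properties,'' and are then simply listed without argument; they function as axioms on the level map and the budget assignment rather than as derived facts. Your ``clean fallback'' --- take $\varepsilon_1<\varepsilon_2<\dots<\varepsilon_L$ as part of the setup so that Ordering holds by construction --- is exactly the paper's implicit stance.

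Your contradiction argument is a reasonable gloss on \emph{why} one would impose this monotonicity, but as you yourself note, it is circular unless the ``lower level $=$ stronger required protection'' reading is already part of the definition; the paper never supplies an independent premise from which Ordering could be deduced. So there is nothing to prove here in the paper's sense, and your write-up should simply record Ordering (together with Symmetry and Transferability) as a standing assumption on the maps $l(\cdot,\cdot)$ and $l\mapsto\varepsilon_l$.
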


We assume that the privacy level set is $\{1,2,...,L\}$, where $L\in \mathbb{N}^+$. The number of edges grows $O(n^2)$ with the nodes in the social graph. In this paper, user $v_i$ formulates her privacy level based on their neighboring relationships as follows:
\begin{equation}
    l(v_i) = \min_{1\leq j\leq n,j\neq i} \{l(i,j)\}.
\end{equation}

Straightforwardly, each user sets her privacy level to the strictest of the neighboring relationships. Before uploading the local view, each user perturbs all adjacent edges with $l(v_i)$ privacy level. Therefore, the edge $\left \langle v_i, v_j \right \rangle$ is actually protected by the privacy level of $l^\prime(i,j)=\max\{l(v_i),l(v_j)\}$. It’s easy to get
\begin{equation*}
    l^\prime(i,j)=\max\{l(v_i),l(v_j)\}\leq \max\{l(i,j),l(i,j)\} = l(i,j).
\end{equation*}

Thus, each edge can usually obtain stronger (at least no worse) privacy protection than the required level.

\section{Subgraph Counting Mechanism} \label{Subgraph Counting Mechanism}

In this section, we first consider that there are two different privacy levels in privacy-preserving social graph analysis, i.e., $L=2$. Intuitively, we classify the edges between nodes as high-sensitive (relationships between core nodes) and low-sensitive (relationships between ordinary nodes). To address this challenge, we propose two high-precision privacy-preserving algorithms to obtain unbiased estimates for $k$-stars counting and triangle counting, respectively. The privacy analysis reveals that the proposed algorithms satisfy fine-grained relationship differential privacy. Finally, we show that our algorithms can be naturally extended to multi-level $(L \geq 3)$ privacy-preserving graph data analysis. The error analysis demonstrates the advantage of our algorithms over existing methods in terms of accuracy.

\subsection{$k$-stars Counting}
Algorithm~\ref{alg:our-star} shows how the data analyst estimates the $k$-stars counts of the whole graph.
It takes three inputs-the whole graph $G$ (represented as neighbor lists $\bma_1, \cdots, \bma_n$), the privacy budget $\varepsilon_{l(i,j)}, l(i,j) \in \{1, 2\} $, and the estimated maximum degree $\widetilde{d}_{max}$, and returns an estimate of $k$-stars counts under FGR-DP. In Line 1, the data analyst first calculates the global sensitivity $\Delta f_\hollowstar$, the maximum number of $k$-stars increases by adding an edge. Then each node $v_i$ calculates its own privacy level $l(v_i)$ based on Eq.(9) and clips the adjacent edges to at most $\widetilde{d}_{max}$ (Line 3-4). Further, each node counts the number of $k$-stars $r_i$ in the local view (Line 5). After getting the $r_i$, they add an appropriate amount of Laplace noise to $r_i$ (Line 6), which is based on $l(v_i)$ (obtained from Line 3), and then submit the obfuscated values to the data analyst (Line 7). Finally, the data analyst applies aggregation to estimate the number of $k$-stars of the whole graph (Line 8).

\textit{Global sensitivity.} To satisfy FGR-DP, each node needs to add Laplace noise to the local $k$-stars count. Therefore, it is the first priority to calculate the global sensitivity (Line 1). However, the degree of each node is sensitive information because it can reveal the existence of edges. In decentralized social graphs, no one knows the maximum degree $d_{max}$ of the whole graph. This paper adopts the estimated maximum degree $\widetilde{d}_{max}$ to replace the true maximum degree $d_{max}$ based on the literature \cite{imola2021locally}. In essence, they first draw a very small privacy budget to estimate the global maximum degree $\widetilde{d}_{max}$. This approach is feasible because the estimation result is larger than the degree of most nodes.

\textit{Clip.} After calculating the privacy level, each node $v_i$ needs to clip the adjacent edges to control the local sensitivity because possible negative noise can result in $\widetilde{d}_{max}<d_{max}$. Specifically, if the degree $d_i$ of node $v_i$ is less than $\widetilde{d}_{max}$, then its adjacent edges will be preserved; otherwise, node $v_i$ will clip its adjacent edges so that the degree $d_i$ is equal to $\widetilde{d}_{max}$. Since an edge can affect the $k$-stars counts of two nodes simultaneously, each node needs to halve the privacy budget based on Theorem 1.

\begin{algorithm}[t]
\caption{Fine-grained LDP for $k$-stars counting.}\label{alg:our-star}
\SetKwInOut{Input}{Input}
\SetKwInOut{Output}{Output}
\Input{Graph $G$ represented as neighbor lists $\bma_1, \cdots, \bma_n \in  \{0, 1\}^n;$
         \\ Privacy budgets $\varepsilon_{l(i,j)}, l(i,j) \in \{1, 2\};$ 
          \\Estimated maximum \mbox{degree $\tilde{d}_{max}$};}
\Output{The estimation result of $f_\hollowstar(G)$.}
    Calculate the global sensitivity $\Delta f_\hollowstar = \dbinom{\tilde{d}_{max}}{k-1}$\\
    \For{each node $v_i$ }
    {
            Calculate privacy level $l(v_i)$ according to Eq.(9); \\
    	\textit{Clip} the neighboring edges to at most $\tilde{d}_{max}$;\\
    	$r_i=\dbinom{d_i}{k}$;\\
    	$\hat{r}_i=r_i+{\rm Lap}(\frac{D}{\varepsilon_{l(v_i)}/2})$;\\
    	Upload $\hat{r}_i$;
    }
    \Return $\sum_{i=1}^{n}\hat{r}_i$
\end{algorithm}

\begin{theorem}\label{thm:star}
    In Algorithm~\ref{alg:our-star}, the existence of any edge $\left \langle v_i, v_j \right \rangle$ is protected by $\varepsilon_{l(i,j)}$-{\rm LDP}, where $l(i,j)\in\{1,2\}$  denotes the privacy level of edge $\left \langle v_i, v_j \right \rangle$.
\end{theorem}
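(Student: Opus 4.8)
The plan is to verify the FGR-DP inequality of Definition~\ref{def:Fine_DP} for the tuple $(\calM_1,\dots,\calM_n)$ produced by Algorithm~\ref{alg:our-star}, reducing it to the two-term case of sequential composition (Theorem~\ref{Sequential Composition}) together with the structural properties of the privacy levels. Fix neighboring graphs $G,G'\in\calG$ differing exactly in the edge $\langle v_i,v_j\rangle$ and an output tuple $(\calY_1,\dots,\calY_n)$. I would first perform a localization step: since $G$ and $G'$ differ only in the symmetric entries $a_{i,j}=a_{j,i}$, the neighbor lists agree in every row except the $i$-th and $j$-th, so $\bma_k=\bma_k'$ for all $k\notin\{i,j\}$. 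Because the $\calM_k$ use independent internal randomness and $\calM_k$ reads only $\bma_k$, the joint probability factorizes over $k$ and every factor with $k\notin\{i,j\}$ cancels from the likelihood ratio; it therefore suffices to bound the product of the $\calM_i$- and $\calM_j$-factors.

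Next I would establish the per-user sensitivity bound. The key point is that the value $r_i$ perturbed in Line~6 depends on $\bma_i$ only through the clipped degree $d_i'=\min(d_i,\tilde d_{max})$, since $r_i=\binom{d_i'}{k}$; in particular the identities of the neighbors retained by \emph{Clip} are irrelevant. Flipping the single edge $\langle v_i,v_j\rangle$ changes $d_i$ by exactly $1$, hence changes $d_i'$ by at most $1$ while keeping $d_i'\le\tilde d_{max}$, so $\lvert r_i(\bma_i)-r_i(\bma_i')\rvert\le\binom{\tilde d_{max}}{k-1}=D$. The standard Laplace-mechanism bound then shows that the noise $\mathrm{Lap}\!\big(D/(\varepsilon_{l(v_i)}/2)\big)$ added in Line~6 makes the $\calM_i$-factor at most $e^{\varepsilon_{l(v_i)}/2}$, and symmetrically the $\calM_j$-factor at most $e^{\varepsilon_{l(v_j)}/2}$; multiplying them (the two-term case of Theorem~\ref{Sequential Composition}) bounds the overall ratio by $e^{\varepsilon_{l(v_i)}/2+\varepsilon_{l(v_j)}/2}$.

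Finally I would invoke the level structure to absorb this into $\varepsilon_{l(i,j)}$. By the definition $l(v_i)=\min_{j'\neq i}l(i,j')$ (Eq.~(9)) we have $l(v_i)\le l(i,j)$, and likewise $l(v_j)\le l(j,i)=l(i,j)$ using the Symmetry lemma; the Ordering lemma then gives $\varepsilon_{l(v_i)}\le\varepsilon_{l(i,j)}$ and $\varepsilon_{l(v_j)}\le\varepsilon_{l(i,j)}$, so $\varepsilon_{l(v_i)}/2+\varepsilon_{l(v_j)}/2\le\varepsilon_{l(i,j)}$, which is exactly the required bound. The step I expect to be the main obstacle is the interplay between the \emph{Clip} step and the local sensitivity: one has to argue carefully that clipping caps the sensitivity of $r_i$ at $D$ even though $d_{max}$ is unknown and $\tilde d_{max}$ may under- or over-estimate it, so that halving $\varepsilon_{l(v_i)}$ in the noise scale is precisely what keeps the two affected users' combined privacy loss within the per-edge budget $\varepsilon_{l(i,j)}$; the remainder is bookkeeping with the composition theorem and the Symmetry/Ordering lemmas.
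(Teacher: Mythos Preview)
Your proposal is correct and follows essentially the same route as the paper: establish that after clipping the per-user sensitivity of $r_i$ is $\binom{\tilde d_{max}}{k-1}$, apply the Laplace mechanism to each of the two affected users, combine via sequential composition to get a total cost of $\varepsilon_{l(v_i)}/2+\varepsilon_{l(v_j)}/2$, and then bound this by $\varepsilon_{l(i,j)}$. Your write-up is in fact more explicit than the paper's---you spell out the factorization/localization step and invoke the Symmetry and Ordering lemmas where the paper simply asserts the final inequality---but the underlying argument is the same.
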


\begin{proof}
   For any node $v_i$, when it adds an edge, the number of $k$-stars increases by $\dbinom{d_i}{k-1}$, and when it removes an edge, the number of $k$-stars decreases by $\dbinom{d_i-1}{k-1}$. Therefore, for the $k$-stars counting algorithm, the global sensitivity of the whole graph is $\dbinom{\widetilde{d}_{max}}{k-1}$ after clipping.

   Without loss of generality, we assume that $i < j$. Based on the Laplace mechanism (Definition 3) and the combination theorem (Theorem 1), we can easily obtain the privacy budget consumed by edge $\left \langle v_i, v_j \right \rangle$ is $\frac{\varepsilon_{l(v_i)}}{2}+\frac{\varepsilon_{l(j)}}{2}\leq\varepsilon_{l(i,j)}$. Thus, the existence of any edge $\left \langle v_i, v_j \right \rangle$ is protected by $\varepsilon_{l(i,j)}$-{\rm LDP}. 
\end{proof}

\begin{theorem}\label{thm:star-var}

    We refer to the $k$-stars counting algorithm by the function $KS(\cdot)$. Let $n_1, n_2\geq0$ $(n_1+n_2=n)$ be the number of nodes with privacy levels 1 and 2, respectively. For given $0<\varepsilon_1<\varepsilon_2$ (corresponding to the privacy level), $\tilde{d}_{max}\geq d_{max}$, 
    $KS(G,\varepsilon_1,\varepsilon_2)$ is an unbiased estimation of $f_\hollowstar(G)$. Formally, we have
    \begin{equation}
        \E[KS(G,\varepsilon_1,\varepsilon_2)]=f_\hollowstar(G),
    \end{equation}
    and the variance
    \begin{equation}
        Var[KS(G,\varepsilon_1,\varepsilon_2)]
        = 4\dbinom{\tilde{d}_{max}}{k-1}^2\left(\frac{n_1}{\varepsilon_1^2}+\frac{n_2}{\varepsilon_2^2}\right).
    \end{equation}
\end{theorem}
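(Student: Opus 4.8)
The plan is to establish the two claims—unbiasedness and the variance formula—separately, each following directly from the structure of Algorithm~\ref{alg:our-star} and standard properties of the Laplace distribution.

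For unbiasedness, first I would observe that because $\tilde{d}_{max} \geq d_{max}$, the clipping step in Line 4 is vacuous: every node $v_i$ has $d_i \leq \tilde{d}_{max}$, so no edges are removed and the local count $r_i = \binom{d_i}{k}$ is exact. The total number of $k$-stars in $G$ decomposes as a sum over centers, $f_\hollowstar(G) = \sum_{i=1}^n \binom{d_i}{k}$, since a $k$-star is determined by choosing a center and $k$ of its neighbors. Then $\hat r_i = r_i + \mathrm{Lap}(\cdot)$ has mean $r_i$ because the Laplace distribution is zero-mean, and by linearity of expectation $\E[\sum_i \hat r_i] = \sum_i r_i = f_\hollowstar(G)$.

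For the variance, the key points are that the noise terms added by distinct nodes are independent, so $\V[\sum_i \hat r_i] = \sum_i \V[\hat r_i] = \sum_i \V[\mathrm{Lap}(D/(\varepsilon_{l(v_i)}/2))]$, and that a $\mathrm{Lap}(\lambda)$ variable has variance $2\lambda^2$. Here $D = \Delta f_\hollowstar = \binom{\tilde d_{max}}{k-1}$ is the scale numerator, so each node with privacy level $l$ contributes $2\bigl(\binom{\tilde d_{max}}{k-1} \big/ (\varepsilon_l/2)\bigr)^2 = 8\binom{\tilde d_{max}}{k-1}^2 / \varepsilon_l^2$. Summing over the $n_1$ nodes at level $1$ and the $n_2$ nodes at level $2$ gives $8\binom{\tilde d_{max}}{k-1}^2\bigl(n_1/\varepsilon_1^2 + n_2/\varepsilon_2^2\bigr)$.

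The main obstacle is a factor-of-two discrepancy: the direct computation above yields a leading constant of $8$, whereas the stated formula has $4$. I would expect the resolution to be either a different convention for the Laplace variance (some authors write $\mathrm{Lap}(\lambda)$ with variance $\lambda^2$ rather than $2\lambda^2$, which is consistent with the ``variance is $(\Delta f/\varepsilon)^2$'' remark in Definition~3 of this paper) or a slightly different bookkeeping of the halving of the budget. I would reconcile with the paper's own Definition~3 convention—under which $\mathrm{Lap}(\Delta f/\varepsilon)$ has variance $(\Delta f/\varepsilon)^2$—so that each level-$l$ node contributes $\bigl(\binom{\tilde d_{max}}{k-1}\big/(\varepsilon_l/2)\bigr)^2 = 4\binom{\tilde d_{max}}{k-1}^2/\varepsilon_l^2$, and the sum over all nodes is exactly $4\binom{\tilde d_{max}}{k-1}^2\bigl(n_1/\varepsilon_1^2 + n_2/\varepsilon_2^2\bigr)$, matching the claim. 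Aside from pinning down this constant, the argument is routine.
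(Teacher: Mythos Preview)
Your proposal is correct and follows essentially the same approach as the paper: linearity of expectation plus zero-mean Laplace noise for unbiasedness, and independence of the per-node noises plus the Laplace variance for the variance formula. Your handling of the factor-of-two issue is exactly right---the paper explicitly adopts the convention in Definition~3 that $\mathrm{Lap}(\Delta f/\varepsilon)$ has variance $(\Delta f/\varepsilon)^2$, and its derivation uses precisely that to arrive at the constant $4$; your extra care in noting that clipping is vacuous under $\tilde d_{\max}\ge d_{\max}$ is a detail the paper's proof leaves implicit.
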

    
\begin{proof}
    For the $k$-stars counting algorithm, each node independently adds Laplacian noise to the local count before sending the obfuscated value to the data analyst. The data analyst sums up all the uploaded values to get the aggregated result. Thus, according to expectation additivity, $KS(G,\varepsilon_1,\varepsilon_2)$ is an unbiased estimation of $f_\hollowstar(G)$ since the expectation of Laplace noise is zero.
    
    Based on Definition 3, we can derive the variance as
\begin{equation}
\begin{aligned}
     &Var[KS(G,\varepsilon_1,\varepsilon_2)]\\ &= n_1 Var\left(Lap\left(\Delta f_\hollowstar /\frac{\varepsilon_1}{2}\right)\right)+n_2 Var\left(Lap\left(\Delta f_\hollowstar /\frac{\varepsilon_2}{2}\right)\right)\\
     &=4n_1\left(\dbinom{\tilde{d}_{max}}{k-1}/ \varepsilon_1\right)^2 + 4n_2\left(\dbinom{\tilde{d}_{max}}{k-1}/ \varepsilon_2 \right)^2 \\
     &=4\dbinom{\tilde{d}_{max}}{k-1}^2\left(\frac{n_1}{\varepsilon_1^2}+\frac{n_2}{\varepsilon_2^2}\right)
\end{aligned}
\end{equation}

\end{proof}

\subsection{Triangle Counting}
In decentralized social graphs, for any three nodes $v_i,v_j$, and $v_k$, node $v_i$ only knows the existence of two neighboring edges $\langle v_i,v_j\rangle$,  $\langle v_i,v_k\rangle$ and lacks the knowledge of the third edge $\langle v_j,v_k\rangle$, e.g., node $v_3$ cannot see the edge between $v_2$ and $v_5$ in Fig. 1. Therefore, the user cannot directly count the number of triangles in the local view. To solve this challenge, we require a two-round interaction algorithm in triangle counting. In the first round, each user uploads the obfuscated values of the adjacent edges. The data analyst integrates the received data into an adjacency matrix $\textbf{A}$ and sends it back to users. User $v_i$ can then observe a noisy edge $\langle v_j,v_k\rangle$ in the adjacent matrix and count the number of noisy triangles formed by  $(v_i, v_j, v_k)$. After this round, each user can obtain an unbiased estimate of the triangle count for their local view. In the second round, each user adds Laplace noise to the unbiased estimate according to the corresponding privacy budget and submits the noisy triangle count to the data analysts. The data analyst aggregates the values uploaded by all users to get an estimate of the triangle counting. We next show the detailed procedure of triangle counting under fine-grained privacy protection.

Algorithm~\ref{alg:our-sanjiao} describes how the data analyst estimates the triangle counts of a given social graph according to the privacy budget $\varepsilon_{l(i, j)}$ and the estimated maximum degree $\tilde{d}_{max}$. It first computes $n_1$, the number of nodes with privacy level is 1 based on the $l(v_i)$ (Line 1). Due to the two-round interaction, in Line 2 the privacy budget $\varepsilon_{l(i, j)}$ is divided into $\alpha\varepsilon_{l(i, j)}$ and $(1 -\alpha)\varepsilon_{l(i, j)}$ according to the privacy combination theorem (see Theorem 1 for details), and then $\alpha$ is broadcast to each node (Line 3). In Line 4, the data analyst reorders all nodes according to their privacy level to ensure that lower-order nodes have higher privacy levels than higher-order nodes, i.e., for any $v_i, v_j$, if $i < j$, then $l(v_i)\leq l(v_j)$. This operation can effectively reduce statistical errors since low-order (less privacy budget) nodes only need to upload a small number of relationships.

Round $1$ (Lines 5-10): Each node $v_i$ perturbs its adjacency bit vector independently according to the privacy level (Lines 6-9). For each bit to perturb, it adopts RR with privacy budget $\alpha\varepsilon_{l(i, j)}$ (the perturbation probability is calculated in Line 5). Then, each node uploads the perturbed adjacency vector $R_i$ based on its order, i.e., the $i$-th row of the lower triangle part of the adjacency matrix. As shown in Fig. 2, higher-order nodes upload more relationships. Finally, the data analyst consolidates uploaded vectors into an obfuscated adjacent matrix $\widetilde{A}$ and sends it to each node (Line 10).

\begin{figure}[!t]
	\centering
	\includegraphics[width=2.5in]{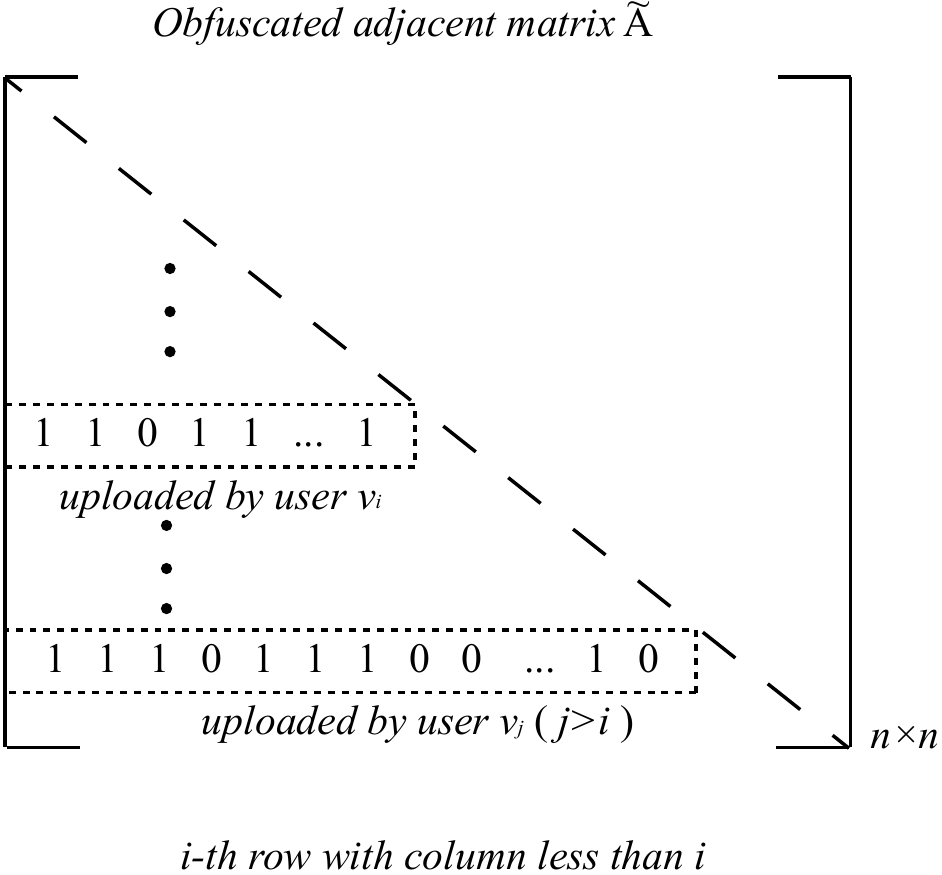}
	\caption{The obfuscated adjacent matrix (the dashed boxes are the parts uploaded by users $v_i$ and $v_j$).}
      \label{fig: matrix}
\end{figure}

Round $2$ (Lines 10-27):
We first calculate the global sensitivity $\Delta f_\bigtriangleup$ and perform clipping as in $k$-stars counting. According to the reordering result, each node only counts triangles formed by nodes that are higher in order than itself to avoid double counting. Node $v_i$ obtains an unbiased estimate $\widetilde{w}_i$ of the local triangle counts $|\{ \forall (i<j<k)|(v_i,v_j,v_k\in V)\wedge(\langle v_i,v_j\rangle,\langle v_i,v_k\rangle,\langle v_j,v_k\rangle\in E)\}|$ from the adjacency matrix $\widetilde{A}$. The estimate $\widetilde{w}_i$ consists of two parts, $\widetilde{w}_i^{(1)}$ and $\widetilde{w}_i^{(2)}$, where $\widetilde{w}_i^{(l)}$ is an unbiased estimate of the triangle counts for which the privacy level of the third edge $\langle v_j, v_k\rangle$ is $l$. 
When privacy level $l(v_i) = 1$, node $v_i$ needs to calculate both $\widetilde{w}_i^{(1)}$ and $\widetilde{w}_i^{(2)}$ since the privacy level of the third edge may be 1 or 2. To get the correct estimate, it first calculates the 2-stars count $s_i^{(l)}$ and  the positive triangle count $t_i^{(l)}$ when the third edge $\left \langle v_j, v_k \right \rangle$ privacy level is $l$ (Lines 13-16), where $i<j<k$. Then node $v_i$ can obtain $\widetilde{w}_i^{(l)}$ by  $s_i^{(l)}$, $t_i^{(l)}$, and $p_l$ in Lines 17-18.
Furthermore, it perturbs the unbiased triangle counts $\widetilde{w}_i^{(1)} + \widetilde{w}_i^{(2)}$ by adding a Laplace noise with privacy budget $(1-\alpha)\varepsilon_1$ (Line 19). Finally, it uploads the perturbed result $\hat{w}_i$ to the data analyst.
When privacy level $l(v_i) = 2$, node $v_i$ only needs to count triangles consisting of nodes with privacy level 2, i.e., $\widetilde{w}_i^{(2)}$. Similarly, it obtains unbiased triangle counts $\widetilde{w}_i^{(2)}$ as above (Lines 23-25). Then node $v_i$ adds a Laplace noise with privacy budget $(1-\alpha)\varepsilon_2$ (Line 26) and sends $\hat{w}_i$ to the data analyst (Line 27).
Finally, in Line 28 the data analyst sums up the uploaded data to get the estimation result.

\begin{algorithm}[t]
    \label{alg:our}
	\caption{Fine-grained LDP for triangle counting.}\label{alg:our-sanjiao}
	\SetKwInOut{Input}{Input}
	\SetKwInOut{Output}{Output}
	\Input{Graph $G$ represented as neighbor lists $\bma_1, \cdots, \bma_n \in  \{0, 1\}^n$;\\
                Privacy budgets $\varepsilon_{l(i, j)}, l(i,j) \in \{1, 2\}$;\\
                Estimated maximum \mbox{degree $\tilde{d}_{max}$.}}
	\Output{The estimation result of $f_\bigtriangleup(G)$.}
	Let $n_1$ be the number of nodes with privacy level $l(v_i)=1$;\\
        Set parameter $\alpha$ for privacy budget allocation; \\
        Send $\alpha$ to each node;\\
        Reorder($G$);\\
        
	Calculate $p_1=\frac{e^{\alpha \varepsilon_1}}{e^{\alpha \varepsilon_1}+1}$, $p_2=\frac{e^{\alpha \varepsilon_1}}{e^{\alpha \varepsilon_2}+1}$;\\
	\For{$i = 1$ to $n_1$}
	{$R_i = (RR_{\alpha \varepsilon_1}(a_{i,1}), RR_{\alpha \varepsilon_1}(a_{i,2}), ..., RR_{\alpha \varepsilon_1}(a_{i,i-1}))$; }
	\For{$i = n_1+1$ to $n$}
	{$R_i = (RR_{\alpha \varepsilon_2}(a_{i,1}), RR_{\alpha \varepsilon_2}(a_{i,2}), ..., RR_{\alpha \varepsilon_2}(a_{i,i-1}))$; }
	Synthesize  $R_i$ ($1\leq i\leq n$) into $G^\prime$;\\
	\For{$i = 1$ to $n_1$}
	{
		Clip the neighboring edges of $\bma_i$ to at most $\tilde{d}_{max}$;\\
		$t_i^{(1)}=|\{(v_i,v_j,v_k): i<j<k, k\leq n_1, a_{i,j} = a_{i,k} = 1, \left \langle j,k \right \rangle\in G^\prime\}|$;\\
		$s_i^{(1)}=|\{(v_i,v_j,v_k): i<j<k, k\leq n_1, a_{i,j} = a_{i,k} = 1\}|$;\\
		$t_i^{(2)}=|\{(v_i,v_j,v_k): i<j<k, k>n_1,, a_{i,j} = a_{i,k} = 1, \left \langle j,k \right \rangle\in G^\prime\}|$;\\
		$s_i^{(2)}=|\{(v_i,v_j,v_k): i<j<k, k>n_1, a_{i,j} = a_{i,k} = 1\}|$;\\
		$\widetilde{w}_i^{(1)}=\frac{1}{2p_1-1}(t_i^{(1)}-(1-p_1) s_i^{(1)})$;\\
		$\widetilde{w}_i^{(2)}=\frac{1}{2p_2-1}(t_i^{(2)}-(1-p_2) s_i^{(2)})$
		
		$\hat{w}_i=\widetilde{w}_i^{(1)} + \widetilde{w}_i^{(2)}+{\rm Lap}(\frac{\tilde{d}_{max}/(2p_1-1)}{(1-\alpha)\varepsilon_1})$;\\
		Upload $\hat{w}_i$;
	}
	\For{$n_1+1$ to $n$}
	{
		Clip the neighboring edges of $\bma_i$ to at most $\tilde{d}_{max}$;\\
		$t_i^{(2)}=|\{(v_i,v_j,v_k): i<j<k, a_{i,j} = a_{i,k} = 1, \left \langle j,k \right \rangle\in G^\prime\}|$;\\
		$s_i^{(2)}=|\{(v_i,v_j,v_k): i<j<k, a_{i,j} = a_{i,k} = 1\}|$;\\
		$\widetilde{w}_i^{(2)}=\frac{1}{2p_2-1}(t_i^{(2)}-(1-p_2) s_i^{(2)})$\\
		$\hat{w}_i=\widetilde{w}_i^{(2)}+ {\rm Lap}\left(\frac{\tilde{d}_{max}/(2p_2-1)}{(1-\alpha)\varepsilon_2}\right)$;\\
		Upload $\hat{w}_i$;
	}
	\Return $\sum_{i=1}^{n}\hat{w}_i$.
	
\end{algorithm}

\begin{theorem}\label{thm:FGT}
    In Algorithm~\ref{alg:our-sanjiao}, the existence of any edge $\left \langle v_i, v_j \right \rangle$ is protected by $\varepsilon_{l(i,j)}$-{\rm LDP}, where $l(i, j)\in\{1,2\}$  denotes the privacy level of edge $\left \langle v_i, v_j \right \rangle$.
\end{theorem}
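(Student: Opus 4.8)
The plan is to track, for an arbitrary edge $\left\langle v_i, v_j\right\rangle$ (WLOG $i<j$ after the reordering in Line~4, so that $l(v_i)\le l(v_j)$), exactly how much privacy budget is spent on it across the two rounds, and then invoke sequential composition (Theorem~\ref{Sequential Composition}) to conclude that the total is at most $\varepsilon_{l(i,j)}$. The key structural observation is that the edge $\left\langle v_i, v_j\right\rangle$ is touched in three places: (1) in Round~1, node $v_i$ (the lower-order endpoint) reports the bit $a_{i,j}$ via randomized response — note node $v_j$ does \emph{not} report $a_{j,i}$, since each node only uploads the lower-triangular part of its row; (2) in Round~2, node $v_i$'s Laplace-perturbed local triangle count $\hat w_i$ depends on $a_{i,j}$; and (3) in Round~2, node $v_j$'s count $\hat w_j$ may also depend on $a_{j,i}=a_{i,j}$ (when $v_j$ forms a triangle using $v_i$ as a third vertex). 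So the budget for $\left\langle v_i,v_j\right\rangle$ is the sum of a randomized-response term and two Laplace terms.

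First I would handle Round~1: by Definition~\ref{def:Fine_DP}'s underlying randomized response (Definition~5), reporting $a_{i,j}$ with probability $p_{l(v_i)} = e^{\alpha\varepsilon_{l(v_i)}}/(e^{\alpha\varepsilon_{l(v_i)}}+1)$ costs $\alpha\varepsilon_{l(v_i)}$; and since only $v_i$ (not $v_j$) reports this bit, that is the \emph{entire} Round~1 cost for the edge. Next I would handle Round~2: the clipping step bounds the local sensitivity of $\widetilde w_i$ with respect to one of $v_i$'s incident edges. Specifically, toggling $a_{i,j}$ changes the 2-star count $s_i^{(l)}$ by at most $\tilde d_{max}$ (the clipped degree) and correspondingly changes $t_i^{(l)}$, so $\widetilde w_i^{(l)} = \frac{1}{2p_l-1}(t_i^{(l)} - (1-p_l)s_i^{(l)})$ has sensitivity at most $\tilde d_{max}/(2p_l-1)$; the Laplace noise $\mathrm{Lap}(\tilde d_{max}/((2p_1-1)(1-\alpha)\varepsilon_1))$ in Line~19 (resp. the level-2 analogue in Line~26) therefore provides $(1-\alpha)\varepsilon_{l(v_i)}$-LDP for the contribution of $\left\langle v_i,v_j\right\rangle$ to $\hat w_i$. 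The same argument applied to $v_j$ gives $(1-\alpha)\varepsilon_{l(v_j)}$ for $\hat w_j$. I would need to double check the edge case where removing $\left\langle v_i,v_j\right\rangle$ can change a count by $\tilde d_{max}$ versus $\tilde d_{max}-1$ (as in the $k$-stars proof), but this only helps the bound.

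Composing the three pieces via Theorem~\ref{Sequential Composition}, the total budget spent on $\left\langle v_i,v_j\right\rangle$ is at most
\begin{equation*}
\alpha\varepsilon_{l(v_i)} + (1-\alpha)\varepsilon_{l(v_i)} + (1-\alpha)\varepsilon_{l(v_j)} = \varepsilon_{l(v_i)} + (1-\alpha)\varepsilon_{l(v_j)},
\end{equation*}
and here is where I expect the main obstacle: this quantity is \emph{not} obviously $\le \varepsilon_{l(i,j)}$, because $\varepsilon_{l(v_i)}$ alone can already equal $\varepsilon_{l(i,j)}$ (by the identity $l'(i,j)=\max\{l(v_i),l(v_j)\}\le l(i,j)$ from Section~\ref{FG-RDP}, together with the Ordering lemma), leaving no room for the extra $(1-\alpha)\varepsilon_{l(v_j)}$ term. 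Resolving this is the crux of the argument. The likely fix is to exploit the reordering $l(v_i)\le l(v_j)$ more carefully: the Round-1 randomized response on $a_{i,j}$ and the Round-2 Laplace term in $\hat w_i$ must be accounted against $v_i$'s budget, while only $\hat w_j$ is charged to the (more relaxed) level $l(v_j)$ — but one still needs $\varepsilon_{l(v_i)}\!\cdot\!(\alpha + (1-\alpha)) + (1-\alpha)\varepsilon_{l(v_j)} \le \varepsilon_{l(i,j)}$. So either the algorithm intends $\hat w_j$'s dependence on $a_{i,j}$ to be folded into a joint accounting (treating $(\calM_i,\calM_j)$ together and using that toggling $a_{i,j}$ changes the \emph{pair} of outputs with combined multiplicative factor controlled by the smaller budget), or the parameter $\alpha$ and the per-level noise scales are chosen precisely so the telescoping works; I would present the joint-mechanism view, arguing that the product over all $n$ mechanisms in Definition~\ref{def:Fine_DP} of the likelihood ratios, when $G$ and $G'$ differ only in $\left\langle v_i,v_j\right\rangle$, factors into exactly the RR term (from $\calM_i$'s round-1 output) times the two Laplace terms (from $\calM_i$'s and $\calM_j$'s round-2 outputs) — all other coordinates being unaffected — and then bound that product by $e^{\varepsilon_{l(i,j)}}$ using $l(v_i),l(v_j)\le l(i,j)$ and the budget-split $\alpha+(1-\alpha)=1$, taking care that the $v_j$ round-2 term either does not arise or is absorbed. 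This last bookkeeping step, reconciling the two-sided Laplace contribution with the single budget $\varepsilon_{l(i,j)}$, is the part I would write most carefully.
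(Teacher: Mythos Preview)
Your accounting of which node touches the edge in each round is reversed, and that is exactly why you end up with a surplus $(1-\alpha)\varepsilon_{l(v_j)}$ term that you cannot absorb.

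In Round~1 (Lines~6--9), node $v_k$ uploads $R_k=(RR(a_{k,1}),\dots,RR(a_{k,k-1}))$, i.e.\ the entries of its row with \emph{smaller} column index. So for the edge $\langle v_i,v_j\rangle$ with $i<j$, it is the \emph{higher}-order node $v_j$ that applies randomized response to $a_{j,i}$, at cost $\alpha\varepsilon_{l(v_j)}$; node $v_i$ never sees this bit in Round~1. You have this backwards (your sentence ``each node only uploads the lower-triangular part of its row'' is right, but $a_{i,j}$ with $j>i$ is in the \emph{upper}-triangular part of row $i$).

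In Round~2, each node $v_k$ counts only triangles $(v_k,v_{j'},v_{k'})$ with $k<j'<k'$ (Lines~13--16 and 23--24), reading its own bits $a_{k,j'},a_{k,k'}$. For $v_j$ this means it inspects only $a_{j,m}$ with $m>j$; since $i<j$, the bit $a_{j,i}$ is never read by $v_j$ in Round~2. Hence $\hat w_j$ does \emph{not} depend on the edge $\langle v_i,v_j\rangle$ --- your ``third piece'' does not exist. Only $\hat w_i$ depends on the edge, and the paper bounds the sensitivity of $\widetilde w_i$ by $\tilde d_{\max}/(2p_{l(v_i)}-1)$ via a case split on $l(i,j)\in\{1,2\}$ (which governs whether $\widetilde w_i$ has one or two summands $\widetilde w_i^{(l)}$), giving a Round-2 cost of $(1-\alpha)\varepsilon_{l(v_i)}$.

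With the correct attribution the composition yields
\[
\alpha\varepsilon_{l(v_j)}+(1-\alpha)\varepsilon_{l(v_i)}\ \le\ \alpha\varepsilon_{l(i,j)}+(1-\alpha)\varepsilon_{l(i,j)}=\varepsilon_{l(i,j)},
\]
using $l(v_i)\le l(v_j)\le l(i,j)$ (each from $l(v_k)=\min_m l(k,m)\le l(i,j)$) together with the Ordering lemma. No joint-mechanism argument or extra bookkeeping is needed; the obstacle you flagged is purely an artifact of the reversed attribution.
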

\begin{proof}
     We assume that $i < j$. Since there are two different privacy levels ($l(i, j)\in\{1,2\}$), we analyze them in two cases.
    
    Case 1: $l(i, j) = 1$. For any edge $\left \langle v_i, v_j \right \rangle$, it consumes privacy budget $\alpha\varepsilon_1$ in the process of generating the obfuscated adjacency matrix (Round 1). After clipping, the degree $d_i\leq\tilde{d}_{max}$ of each node $v_i$ in the whole graph. 
    Therefore, adding or removing edge $\left \langle v_i, v_j \right \rangle$ will lead to 
    $|\Delta (s^{(1)}+s^{(2)})|\leq\tilde{d}_{max}$ and 
    $|\Delta (t^{(1)}+t^{(2)})|\leq\tilde{d}_{max}$.
    It is obvious that $\Delta s^{(1)}$, $\Delta s^{(2)}$, $\Delta t^{(1)}$, and $\Delta t^{(2)}$ are non-positive or non-negative at the same time.
    We have the following conclusions:
    \begin{equation}
    \begin{aligned}
        &|\Delta(\widetilde{w}^{(1)}+\widetilde{w}^{(2)})|\\
        &=|\frac{1}{2p_1-1}(\Delta t^{(1)} - p_1\Delta s^{(1)})\\
        &+ \frac{1}{2p_2-1}(\Delta t^{(2)} -  p_2\Delta s^{(2)})|\\
        &=|\frac{1}{2p_1-1}\Delta t^{(1)}+\frac{1}{2p_2-1}\Delta t^{(2)}\\
        &- (\frac{p_1}{2p_1-1}\Delta s^{(1)} + \frac{p_2}{2p_2-1}\Delta s^{(2)})|\\
        &\leq \frac{1}{2p_1-1}\max\{|\Delta t^{(1)}+\Delta t^{(2)}|, |p_1\Delta s^{(1)} + p_2\Delta s^{(2)}|\}\\
        &\leq \frac{1}{2p_1-1}\tilde{d}_{max}.
    \end{aligned}
    \end{equation}
    Based on the Laplace mechanism, the existence of any edge $\left \langle v_i, v_j \right \rangle$ is protected by $(1-\alpha)\varepsilon_1$-LDP in Round 2.
    Following the combination theorem, we complete the proof of case 1.
    
    Case 2: $l(i,j) = 2$. Similarly, For any edge $\left \langle v_i, v_j \right \rangle$, it consumes privacy budget $\alpha\varepsilon_2$ in Round 1.
    Adding or removing edge $\left \langle v_i, v_j \right \rangle$ will lead to 
    $|\Delta s^{(2)}|\leq\tilde{d}_{max}$ and 
    $|\Delta t^{(2)}|\leq\tilde{d}_{max}$.
    We have:
    \begin{equation}
    \begin{aligned}
        |\Delta \widetilde{w}|
        &=|\Delta\widetilde{w}^{(2)}|\\
        &=|\frac{1}{2p_2-1}\Delta t^{(2)} -  \frac{p_2}{2p_2-1}\Delta s^{(2)}|\\
        &\leq \max\{|\frac{1}{2p_2-1}\Delta t^{(2)}|, |\frac{p_2}{2p_2-1}\Delta s^{(2)}|\}\\
        &\leq \frac{1}{2p_2-1}\tilde{d}_{max}.
    \end{aligned}
    \end{equation}
    According to Definition 3, Round 2 satisfies $(1-\alpha)\varepsilon_2$-LDP for any edge $\left \langle v_i, v_j \right \rangle$. Based on the combination theorem, we complete the proof of case 2.
\end{proof}

\setcounter{subfigure}{0}
\begin{figure}[t]
	\centering
	
	\subfigbottomskip=16pt 
	\subfigcapskip=-5pt 
	\subfigure[Nodes $v_1$ at the end]{
	    \label{fig:uncertain1}
		\includegraphics[width=0.4\linewidth]{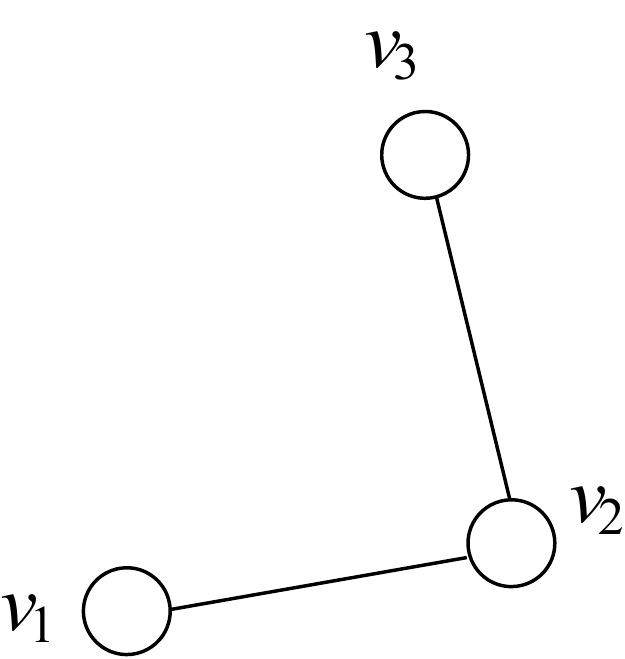}}
	\subfigure[Node $v_1$ in the middle]{
	    \label{fig:uncertain2}
		\includegraphics[width=0.4\linewidth]{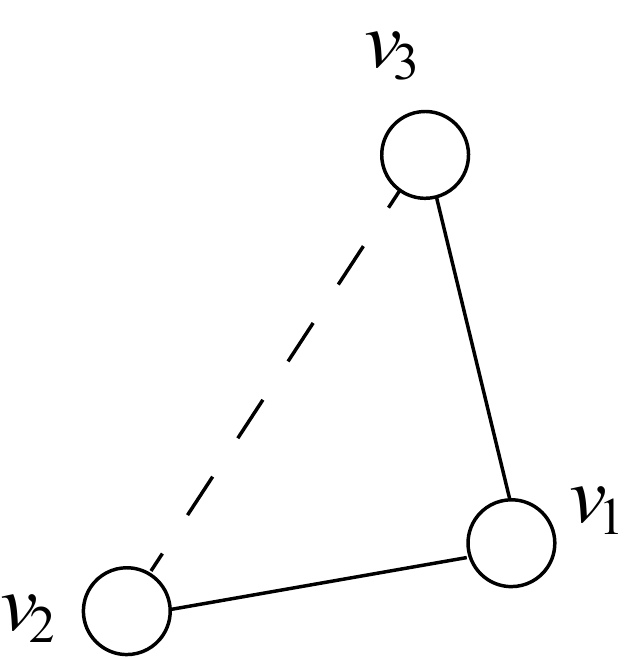}}

    \caption{Two graphs with the same network topology but different orders.}
\end{figure}

In decentralized social graphs, it is difficult to determine the exact variance of the proposed triangle counting algorithm. Even for the same network topology, the nodes' order can impact the estimation accuracy. For example, We consider two graphs with the same topology as shown in Fig.~\ref{fig:uncertain1} and \ref{fig:uncertain2}. In Fig.~\ref{fig:uncertain1}, node $v_1$ has only one adjacent edge, so it can't have any adjacent triangles. Moreover, $v_2$ and $v_3$ cannot form triangles with nodes with lower ordinals. Thus, there are no uncertain triangles in Fig.~\ref{fig:uncertain1}. In Fig.~\ref{fig:uncertain2}, for node $v_1$, there are two adjacent nodes whose ordinal number is larger than itself. Therefore, in local views of node $v_1$, there may be a triangle formed by $v_1,v_2$, and $v_3$ in the graph. It is evident that Fig.~\ref{fig:uncertain1} and \ref{fig:uncertain2} have the same network structure but different estimation errors. We give the error upper bound in Theorem \ref{thm:FGT-var}.

\begin{theorem}\label{thm:FGT-var}
    Let $n_1, n_2\geq0$ $(n_1+n_2=n)$ be the number of users with privacy protection at level 1, 2 respectively.
    For given $0<\varepsilon_1<\varepsilon_2$ (corresponding to the level), $\tilde{d}_{max}\geq d_{max}$ and $0<\alpha<1$, we have
    \begin{equation}
        \E[FGT(G,\varepsilon_1,\varepsilon_2)]=f_\bigtriangleup(G),
    \end{equation}
    and 
    \begin{equation}
        \V[FGT(G,\varepsilon_1,\varepsilon_2)]
        \leq O\left(n_1 \cdot f(\varepsilon_1) +n_2\cdot f(\varepsilon_2)\right),
    \end{equation}
    where $f(x) := \frac{e^{\alpha x}}{(e^{\alpha x}-1)^2}\left(\tilde{d}_{max}^3 +
        \frac{e^{\alpha x}}{((1-\alpha)x)^2}\tilde{d}_{max}^2\right)$ is a monotone decreasing function.
\end{theorem}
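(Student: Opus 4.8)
The plan is to write $FGT(G,\varepsilon_1,\varepsilon_2)=\sum_{i=1}^n\hat w_i$ and to split each $\hat w_i$ into the local estimate $\widetilde w_i$ built from the shared noisy graph $G'$ of Round~1 and the fresh Laplace noise $Z_i$ added in Round~2. Since $Z_i$ is independent of $G'$ and of the other $Z_{i'}$, this gives $\E[FGT]=\E[\sum_i\widetilde w_i]$ and $\V[FGT]=\V[\sum_i\widetilde w_i]+\sum_i\V[Z_i]$, and I then bound the two pieces separately.

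\emph{Unbiasedness.} Because $\tilde d_{max}\ge d_{max}$, the clipping steps are inert, so every $s_i^{(l)}$ is a deterministic function of $G$. For $j<k$, the image of $\left\langle v_j,v_k\right\rangle$ in $G'$ is $\mathrm{RR}_{\alpha\varepsilon_{l(v_k)}}(a_{k,j})$, i.e.\ a Bernoulli variable equal to $a_{j,k}$ with probability $p_{l(v_k)}$ and to $1-a_{j,k}$ otherwise, and these variables are mutually independent over pairs $j<k$ (each bit is perturbed once, by the higher-indexed endpoint). Writing $w_i^{(l)}$ for the true number of triangles $(v_i,v_j,v_k)$, $i<j<k$, whose third edge $\left\langle v_j,v_k\right\rangle$ has level $l$, and separating the $2$-stars that do and do not close up, one gets $\E[t_i^{(l)}]=(2p_l-1)w_i^{(l)}+(1-p_l)s_i^{(l)}$, hence $\E[\widetilde w_i^{(l)}]=w_i^{(l)}$. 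The reordering guarantees every triangle is counted exactly once, at its lowest-indexed vertex and with the correct $w_i^{(1)}/w_i^{(2)}$ split; summing and using $\E[Z_i]=0$ gives $\E[FGT]=\sum_i(w_i^{(1)}+w_i^{(2)})=f_\bigtriangleup(G)$.

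\emph{Round~1 variance.} The $\widetilde w_i$ are correlated, since the noisy edge $\left\langle v_j,v_k\right\rangle$ enters the local estimate of every common neighbour $v_i$ with $i<j$; so I will not sum their variances but regroup the double sum by edge. With $X_{j,k}$ the Bernoulli indicator above, $\beta_l:=1/(2p_l-1)$, and $N_{j,k}$ the number of common neighbours of $v_j,v_k$ with index below $j$, one has $\sum_i\widetilde w_i=C+\sum_{j<k}\beta_{l(v_k)}N_{j,k}X_{j,k}$ for a deterministic $C$. Independence of the $X_{j,k}$ and $\V[X_{j,k}]=p_{l(v_k)}(1-p_{l(v_k)})$ give $\V[\sum_i\widetilde w_i]=\sum_{j<k}\beta_{l(v_k)}^2\,p_{l(v_k)}(1-p_{l(v_k)})\,N_{j,k}^2=\sum_{j<k}\frac{e^{\alpha\varepsilon_{l(v_k)}}}{(e^{\alpha\varepsilon_{l(v_k)}}-1)^2}N_{j,k}^2$. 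All (true) degrees are $\le d_{max}\le\tilde d_{max}$, so $N_{j,k}\le\tilde d_{max}$; and indexing a triple $(v_i,v_j,v_k)$ with $i<j<k$, $a_{i,j}=a_{i,k}=1$, by its top vertex $v_k$ gives $\sum_{j<k:\,l(v_k)=l}N_{j,k}=|\{(v_i,v_j,v_k):i<j<k,\ a_{i,j}=a_{i,k}=1,\ l(v_k)=l\}|\le n_l\,\tilde d_{max}^2$ ($n_l$ choices of $v_k$, at most $\tilde d_{max}^2$ completions $(v_i,v_j)$). Hence $\sum_{j<k:\,l(v_k)=l}N_{j,k}^2\le\tilde d_{max}\,(n_l\tilde d_{max}^2)=n_l\tilde d_{max}^3$, so $\V[\sum_i\widetilde w_i]\le\sum_{l\in\{1,2\}}\frac{e^{\alpha\varepsilon_l}}{(e^{\alpha\varepsilon_l}-1)^2}\,n_l\,\tilde d_{max}^3$.

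\emph{Round~2 variance and conclusion.} The Laplace noises give $\sum_i\V[Z_i]=2\sum_{l\in\{1,2\}}\frac{n_l\tilde d_{max}^2}{(2p_l-1)^2((1-\alpha)\varepsilon_l)^2}$, and $(2p_l-1)^{-2}=\big(\tfrac{e^{\alpha\varepsilon_l}+1}{e^{\alpha\varepsilon_l}-1}\big)^2\le\tfrac{4e^{2\alpha\varepsilon_l}}{(e^{\alpha\varepsilon_l}-1)^2}$, so this term is $O\big(\sum_{l}n_l\frac{e^{\alpha\varepsilon_l}}{(e^{\alpha\varepsilon_l}-1)^2}\cdot\frac{e^{\alpha\varepsilon_l}}{((1-\alpha)\varepsilon_l)^2}\tilde d_{max}^2\big)$. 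Adding the two parts yields $\V[FGT]\le O\big(n_1 f(\varepsilon_1)+n_2 f(\varepsilon_2)\big)$ with $f$ as stated, and $f$ is decreasing on $(0,\infty)$ because $\frac{e^{\alpha x}}{(e^{\alpha x}-1)^2}=\frac{1}{4\sinh^2(\alpha x/2)}$ and $\frac{e^{2\alpha x}}{(e^{\alpha x}-1)^2((1-\alpha)x)^2}=\frac{1}{(1-e^{-\alpha x})^2((1-\alpha)x)^2}$ are both decreasing. The only genuinely delicate step is the Round~1 variance: one must notice the cross-node correlation through $G'$, reorganise the sum so that it runs over the independent per-edge noises $X_{j,k}$, and then pull the factor $n_l$ (rather than $n$) out of the combinatorial count by pivoting on the top vertex of each common-neighbour triple; the rest is routine bookkeeping with Laplace second moments and randomized-response probabilities.
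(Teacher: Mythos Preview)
Your proposal is correct and follows essentially the same route as the paper: separate the Round~2 Laplace noise from the Round~1 estimator, reorganise $\sum_i\widetilde w_i$ by the independent per-edge Bernoulli variables so that the variance becomes $\sum_{j<k}\beta_{l(v_k)}^2 p_{l(v_k)}(1-p_{l(v_k)})c_{jk}^2$ (the paper's $c_{jk}$ is your $N_{j,k}$), and then bound $\sum_{j<k:\,l(v_k)=l}c_{jk}^2\le n_l\tilde d_{max}^3$. Your write-up is in fact more careful than the paper's in two places---you make explicit the correlation among the $\widetilde w_i$ through $G'$ before regrouping, and you justify the combinatorial bound by pivoting on the top vertex $v_k$---but the underlying argument is the same.
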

\begin{proof}
    First 
    \begin{equation}
    \begin{aligned}
    \E[FGT(G)]=&\E\bigg[\widetilde{w}^{(1)}+\widetilde{w}^{(2)}\\
        &+\sum_{i=1}^{n_1}{\rm Lap}\left(\frac{\tilde{d}_{max}/(2p_1-1)}{(1-\alpha)\varepsilon_1}\right)\\
        &+\sum_{i=n_1+1}^n {\rm Lap}\left(\frac{\tilde{d}_{max}/(2p_2-1)}{(1-\alpha)\varepsilon_2}\right) \bigg] \\
        &=\E[\widetilde{w}^{(1)}+\widetilde{w}^{(2)}].
    \end{aligned}
    \end{equation}
    Because $\E[s^{(1)}]=s^{(1)}$, $\E[s^{(2)}]=s^{(2)}$,
    \begin{equation}
        \E[t^{(1)}]=(1-p_1)(s^{(1)}-w^{(1)})+p_1w^{(1)},\\
    \end{equation}
    and
    \begin{equation}
        \E[t^{(2)}]=(1-p_2)(s^{(2)}-w^{(2)})+p_2w^{(2)},\\
    \end{equation}
    we have
    \begin{equation}
    \begin{aligned}
        &\E[\widetilde{w}^{(1)}+\widetilde{w}^{(2)}]\\
        &= \E[t^{(1)}-p_1s^{(1)}]+\E[t^{(2)}-p_2s^{(2)}]\\
        &=w^{(1)}+w^{(2)}\\
        &=f_\bigtriangleup(G).
    \end{aligned}
    \end{equation}
    
The variance is proved as follows:
    \begin{equation}
    \begin{aligned}
        &\V[FGT(G)]\\
        &=\V[\sum_{i=1}^n \hat{w}_i] \\
        &=\V[\sum_{i=1}^{n_1} \widetilde{w}_i+\sum_{i=n_1+1}^n \widetilde{w}_i]\\
        &+\V[\sum_{i=1}^{n_1}{\rm Lap}\left(\frac{\tilde{d}_{max}/(2p_1-1)}{(1-\alpha)\varepsilon_1}\right)\\
        &+\sum_{i=n_1+1}^n {\rm Lap}\left(\frac{\tilde{d}_{max}/(2p_2-1)}{(1-\alpha)\varepsilon_2}\right)]\\
        &=\V[\frac{1}{2p_1-1}(t^{(1)}-(1-p_1)s^{(1)})]\\
        &+\V[\frac{1}{2p_2-1}(t^{(2)}-(1-p_2)s^{(2)})]\\
        &+n_1\left(\frac{\tilde{d}_{max}/(2p_1-1)}{(1-\alpha)\varepsilon_1}\right)^2+n_2\left(\frac{\tilde{d}_{max}/(2p_2-1)}{(1-\alpha)\varepsilon_2}\right)^2\\
        &=\frac{1}{(2p_1-1)^2}\V[t^{(1)}]+\frac{1}{(2p_2-1)^2}\V[t^{(2)}]\\
        &+n_1\left(\frac{\tilde{d}_{max}e^{\alpha\varepsilon_1}}{(e^{\alpha\varepsilon_1}-1)(1-\alpha)\varepsilon_1}\right)^2
        +n_2\left(\frac{\tilde{d}_{max}e^{\alpha\varepsilon_2}}{(e^{\alpha\varepsilon_2}-1)(1-\alpha)\varepsilon_2}\right)^2
    \end{aligned}
    \end{equation}
Let 
\begin{equation*}
    c_{jk} = \sum_{i<j<k} \boldsymbol{1}(a_{i,j}=1\ \&\ a_{i,k}=1),
\end{equation*}
where $c_{jk}$ denotes the triangle counts that be affected by the edge $\left \langle v_j, v_k \right \rangle$, as shown in Fig.~\ref{fig: variance-root}.
Following the Bernoulli distribution, the variance of the value of $\left \langle v_j, v_k \right \rangle$ on the obfuscated matrix is $p_{l(j,k)}(1-p_{l(j,k)})$, $p_{l(j,k)}$ is the probability of retaining the origin value) regardless of whether it existed or not.

Therefore we get
\begin{equation}
\begin{aligned}
        &\frac{1}{(2p_1-1)^2}\V[t^{(1)}]+\frac{1}{(2p_2-1)^2}\V[t^{(2)}]\\
        &=\frac{1}{(2p_1-1)^2}p_1(1-p_1)\sum_{j<k, k\leq n_1}c_{jk}^2\\
        &+\frac{1}{(2p_2-1)^2}p_2(1-p_2)\sum_{j<k, k > n_1}c_{jk}^2\\
        &\leq\frac{p_1(1-p_1)\tilde{d}_{max}^3n_1}{(2p_1-1)^2}+\frac{p_2(1-p_2)\tilde{d}_{max}^3n_2}{(2p_2-1)^2}\\
        &\leq O\left(\frac{e^{\alpha\varepsilon_1}\tilde{d}_{max}^3}{(e^{\alpha\varepsilon_1}-1)^2}n_1+\frac{e^{\alpha\varepsilon_2}\tilde{d}_{max}^3}{(e^{\alpha\varepsilon_2}-1)^2}n_2\right)
\end{aligned}
\end{equation}
We complete the proof by combining the upper bound on the variance of the Laplace noise.

\end{proof}

\begin{figure}[!t]
	\centering
	\includegraphics[width=2.5in]{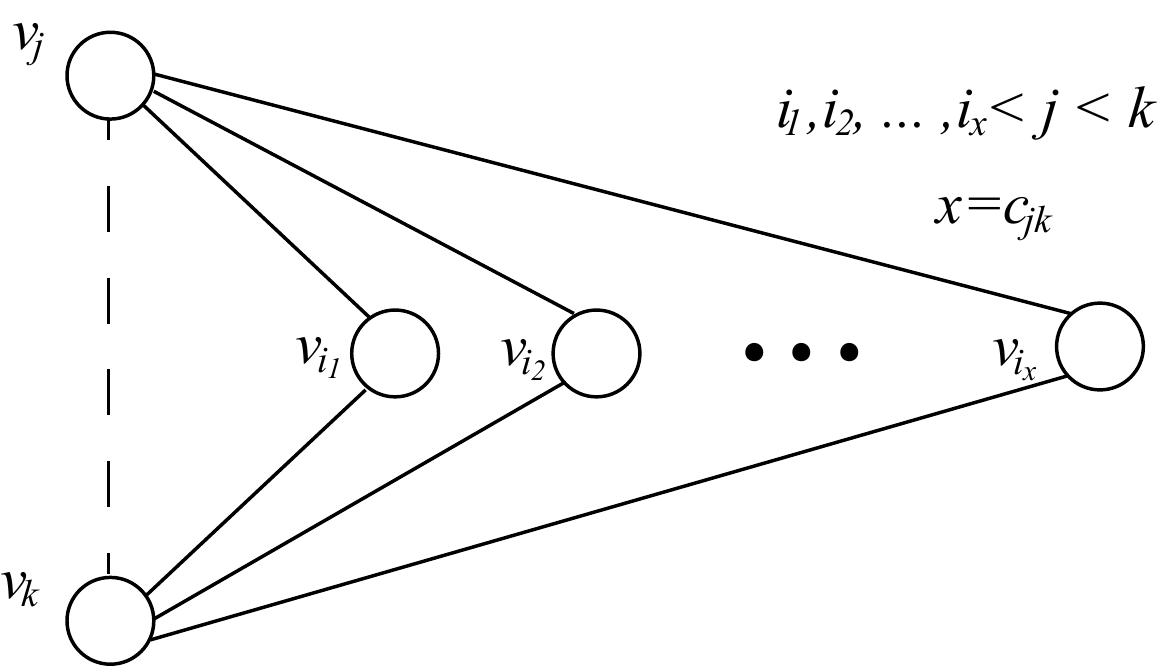}
	\caption{The affect of edge $\left \langle v_j, v_k \right \rangle$.}
	\label{fig: variance-root}
\end{figure}

\subsection{Muiti-level protection}
So far our discussion is limited to only two privacy levels in the system, i.e., $L=2$. We next show that our algorithms can be naturally extended to a multi-level privacy-preserving ($L\geq 3$) model for social graph data. We assume that there are $n$ nodes in the graph, and the number of nodes with the highest sensitivity level $i$ of adjacent edges is $n_i$, whose corresponding privacy budget is $\varepsilon_i$. Formally, we formulate $ n_i=|\{v_j|\min_{1\leq k\leq n,k\neq j}l(j,k)=i\}|,n=\sum_{i=1}^L n_i$. The smaller the privacy budget $\varepsilon_i$, the higher the privacy protection level. Our goal is to provide fine-grained privacy protection for different levels of edges. Then, we extend Algorithm 1 and Algorithm 2 to apply $k$-stars and triangle counting for multi-level privacy protection, respectively. We theoretically analyze their variances and compare them with existing methods.

\textbf{$k$-stars Counting:}
Since there are $L$ privacy levels, an extended version of Algorithm 1 would require that each node $v_i$ adds Laplacian noise $Lap(\frac{D}{\varepsilon_{l(v_i)}/2})$ to perturb the local view, where $l(v_i)\in (1,2,\cdots,L)$. The data analyst would aggregate the upload results in the same way as in Algorithm 1. We can easily derive the variance of the mechanism is
\begin{equation}
        \V[KS(G,\varepsilon_1,\varepsilon_2, ... ,\varepsilon_L)]
        = 4\dbinom{\tilde{d}_{max}}{k-1}^2\sum_{i=1}^L(\frac{n_i}{\varepsilon_i^2}).
    \end{equation}

\textbf{Triangle Counting:}
We extend Algorithm 2 to multi-level triangle counting. To achieve fine-grained privacy protection, the extended version of Algorithm 2 would require each node to perform the following:
\begin{enumerate}
    \item Initialize $p_l=\frac{e^{\alpha \varepsilon_l}}{e^{\alpha \varepsilon_l}+1}$, $l\in (1,2,\cdots,L)$. The data analyst rearranges each node based on the privacy level, and then each node obfuscates the local view based on its serial number and privacy level before uploading.
    \item In the second round, the data analyst divides the nodes into different zones $\{(1,n_1),(n_1+1,n_1+n_2),\cdots,(\sum_{l=1}^L n_l+1,n) \}$ according to the privacy level. The result $\hat{w}_i$ uploaded by the node in the $i$-th interval (corresponding to the privacy budget is $\varepsilon_i$) is 
    \begin{equation}
    \begin{aligned}
      \hat{w}_i = &\widetilde{w}_i^{(l(v_i))} + \widetilde{w}_i^{(l(v_i)+1)}+ \cdots + \widetilde{w}_i^{(L)}\\ & + {\rm Lap}(\frac{\tilde{d}_{max}/(2p_{l(v_i)}-1)}{(1-\alpha)\varepsilon_{l(v_i)}})
    \end{aligned}
    \end{equation} 
\end{enumerate}

Once the data analyst collects data from all nodes, she can estimate the triangle counts $\sum_{i=1}^n \hat{w}_i$ in the same way as in Algorithm 2. The accuracy of this estimate is determined by the number of edges with high sensitivity levels, and such edges represent only a small fraction in decentralized social graphs. Thus, our scheme can substantially reduce the error in graph analysis. Obviously, the variance of the mechanism is at most
\begin{equation}
        \V[FGT(G,\varepsilon_1,\varepsilon_2,...,\varepsilon_L)]
        \leq O\left(\sum_{i=1}^L n_i \cdot f(\varepsilon_i) \right),
\end{equation}
    where $f(x) := \frac{e^{\alpha x}}{(e^{\alpha x}-1)^2}\left(\tilde{d}_{max}^3 + \frac{e^{\alpha x}}{((1-\alpha)x)^2}\tilde{d}_{max}^2\right)$ is a monotone decreasing function.

\begin{table}[ht]\label{l2loss}
\caption{Upper Bounds on MSE for privately estimating $f_{k\hollowstar}$ and $f_\bigtriangleup$}
\center
\renewcommand\arraystretch{1.5}
\setlength{\tabcolsep}{7mm}{
\begin{tabular}{|c|c|c|c|}\hline
  & Local2Rounds & Our Scheme   \\ \hline
$f_{k\hollowstar}$   & $O(n/\varepsilon^2)$     & $O(\sum_{i=1}^L(n_i/\varepsilon_i^2))$   \\ \hline
$f_\bigtriangleup$   & $O\left(n \cdot f(\varepsilon)\right)$     & $O\left(\sum_{i=1}^L n_i f(\varepsilon_i) \right)$    \\ \hline
\end{tabular}
}
\end{table}

We compare the upper bounds on the MSE of our scheme with Local2Rounds for privately estimating $f_{k\hollowstar}$ and $f_\bigtriangleup$ in Table II, where $n = \sum_{i=1}^{L}n_i, \varepsilon_i \leq \varepsilon$. Clearly, our scheme has a lower upper bound than Local2Rounds’ due to the fact that we provide fine-grained privacy protection for edges with different privacy levels, while Local2Rounds would overprotect the edges with low sensitivity.

\section{Experiments}\label{Experiments}
\subsection{Experiment settings}

\textbf{Datasets.} We perform experiments on two real-world datasets from {\itshape Snap datasets: Stanford large network dataset collection} \cite{datasets}. The LiveJournal database (LJDB) includes “friends list” from LiveJournal, a free online blogging community where users proclaim friendships with each other. The Orkut database derives from a free online social network where users can make friendships. Orkut also allows users to create groups that other members can join. Table III shows the properties of the datasets. 

\begin{table}[ht]\label{datasets}
\caption{Dataset Properties}
\center
\renewcommand\arraystretch{1.5}
\setlength{\tabcolsep}{2.7mm}{
\begin{tabular}{|c|c|c|c|}\hline
Dataset  & Number of nodes   &   Number of edges  &   Average degree     \\ \hline
LJDB     &   3997962       &       34681189   &    8.67            \\ \hline
Orkut    &   3072441       &      117185083   &    38.1             \\ \hline
\end{tabular}
}
\end{table}

Obviously, LJDB is more sparse than Orkut. For each database, we randomly select $n$ users from the global graph and extract a graph $G=(V,E)$ with $n$ users. We then estimate the number of triangles $f_\bigtriangleup$ and $k$-stars $f_\hollowstar$ using the algorithm in Section V.

\textbf{Parameter selection.}  Notice that our notion FGR-DP is generally suitable for multi-level privacy protection in social graph analysis. Therefore, we have to allocate multiple privacy budgets for different sensitivity levels. In the experiment, we assume there are two privacy levels with privacy budget$ \{\varepsilon_1, \varepsilon_2\}$ (as we describe in Section V), and we set  $\varepsilon_2= 2\varepsilon_1$. The privacy budget for any edges is randomly selected from the two values with a specific budget distribution, where the default distribution is $\{20\%, 80\% \}$, and we will vary the budget distribution in our experiments to evaluate the impact.  Intuitively, we consider that 20$\%$ of the edges in the social graph have higher privacy requirements than the remaining edges.  

Except for different privacy levels, since our algorithm contains two rounds of interactions in triangle counting, we need to split the privacy budget among different rounds. Recall from section V that Round 1 estimates a confused adjacency matrix in the two-round interaction, and Round 2 reports noise counts. As we value Round 1 and Round 2 equally in triangle counting, $\alpha$ is set to 0.5. Finally, for ease of computation, we use the maximum degree $d_{max}$ as the global sensitivity, i.e., $\tilde{d}_{max}=d_{max}$.

\begin{figure}[!t]
  \centering
  \subfigure[LJDB(triangle, $n=10000$)]{\includegraphics[width=4cm]{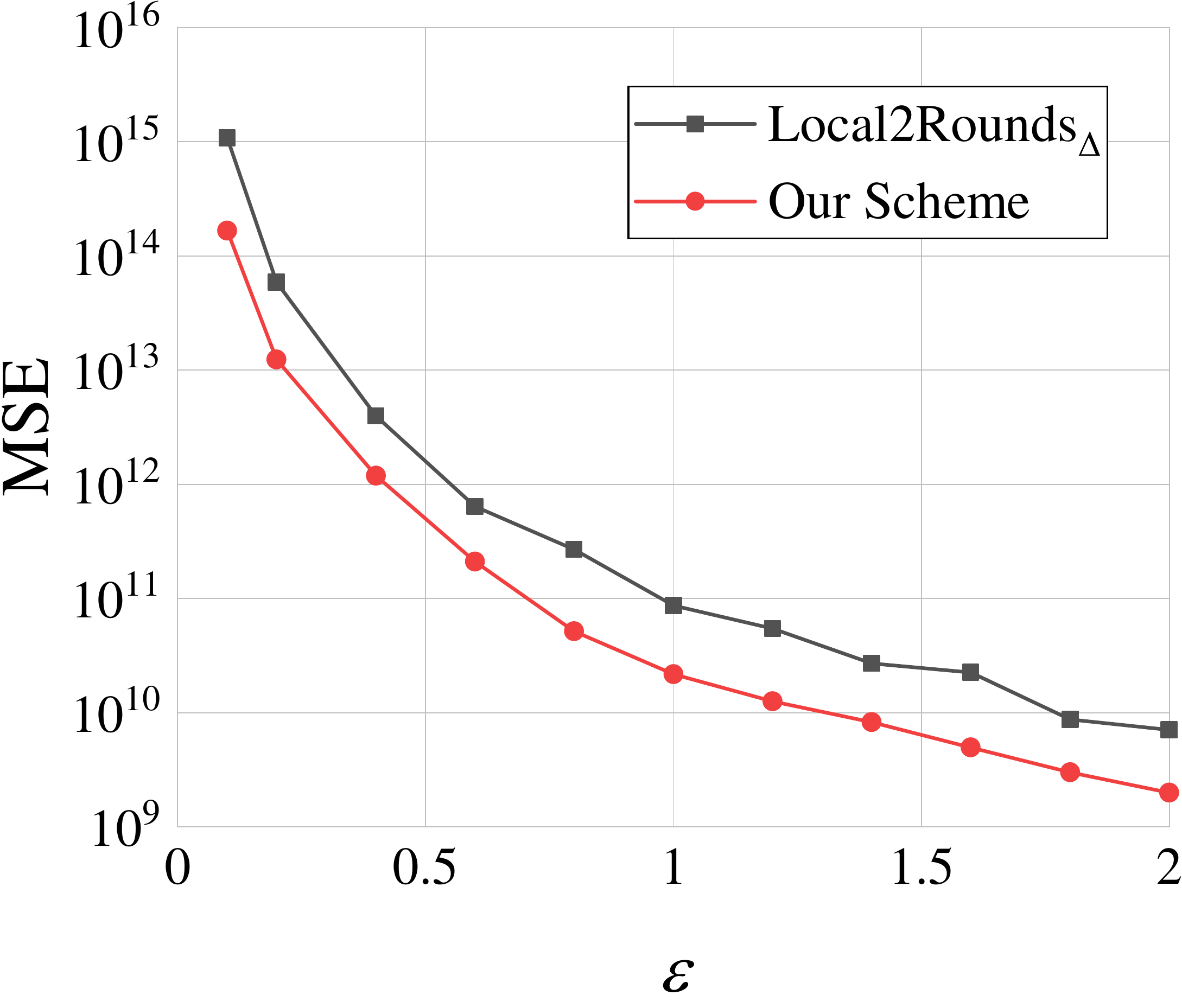}} 
  \subfigure[Orkut(triangle, $n=10000$)]{\includegraphics[width=4cm]{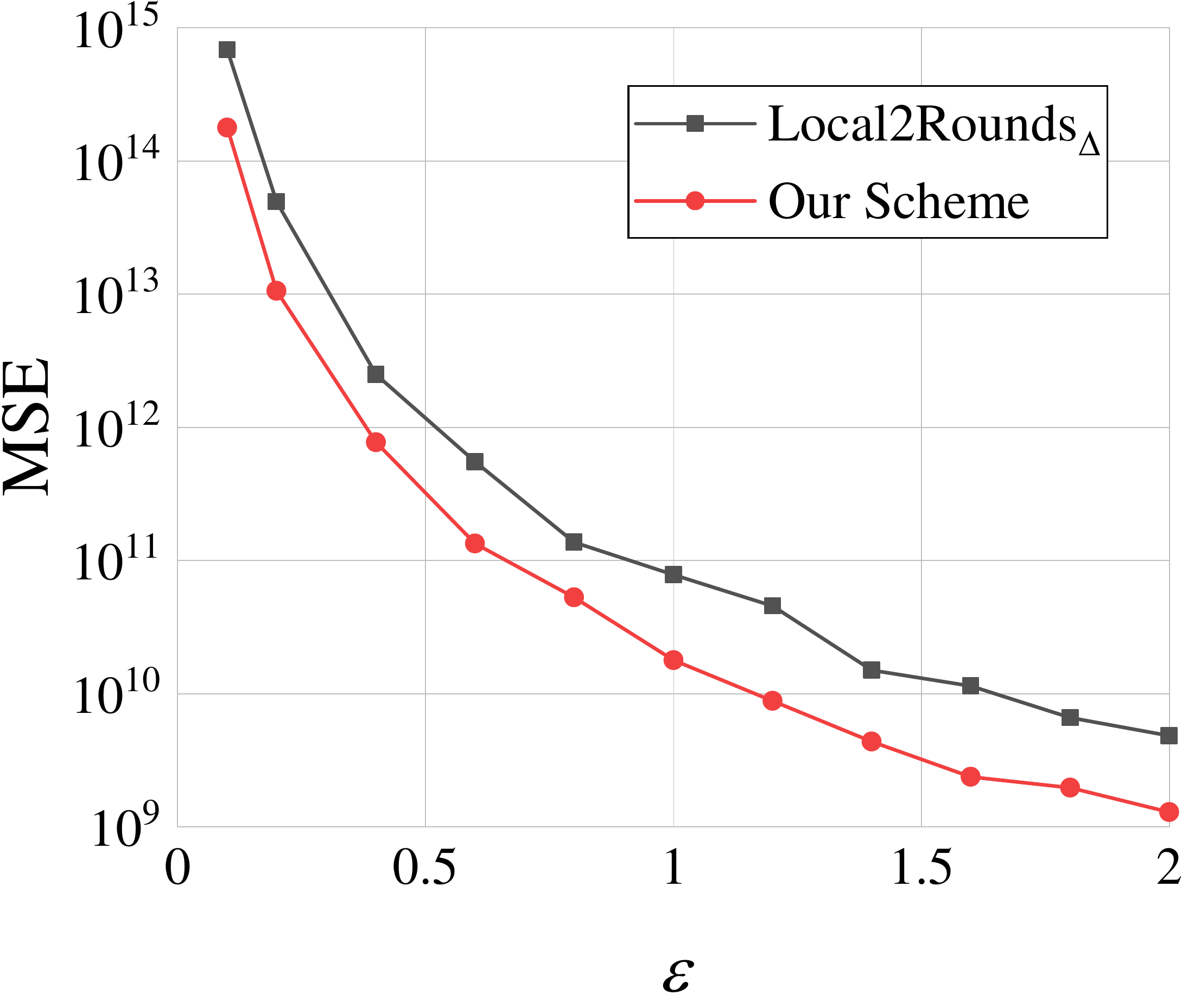}} \\
  \caption{Relation between $\varepsilon$ and the MSE in triangle counting when $n = 10000, \alpha=0.5$)}
  \vspace{-0.2in}
\end{figure} 

\begin{figure}[!t]
  \centering
  \subfigure[LJDB(2-star, $n=24000$)]{\includegraphics[width=4cm]{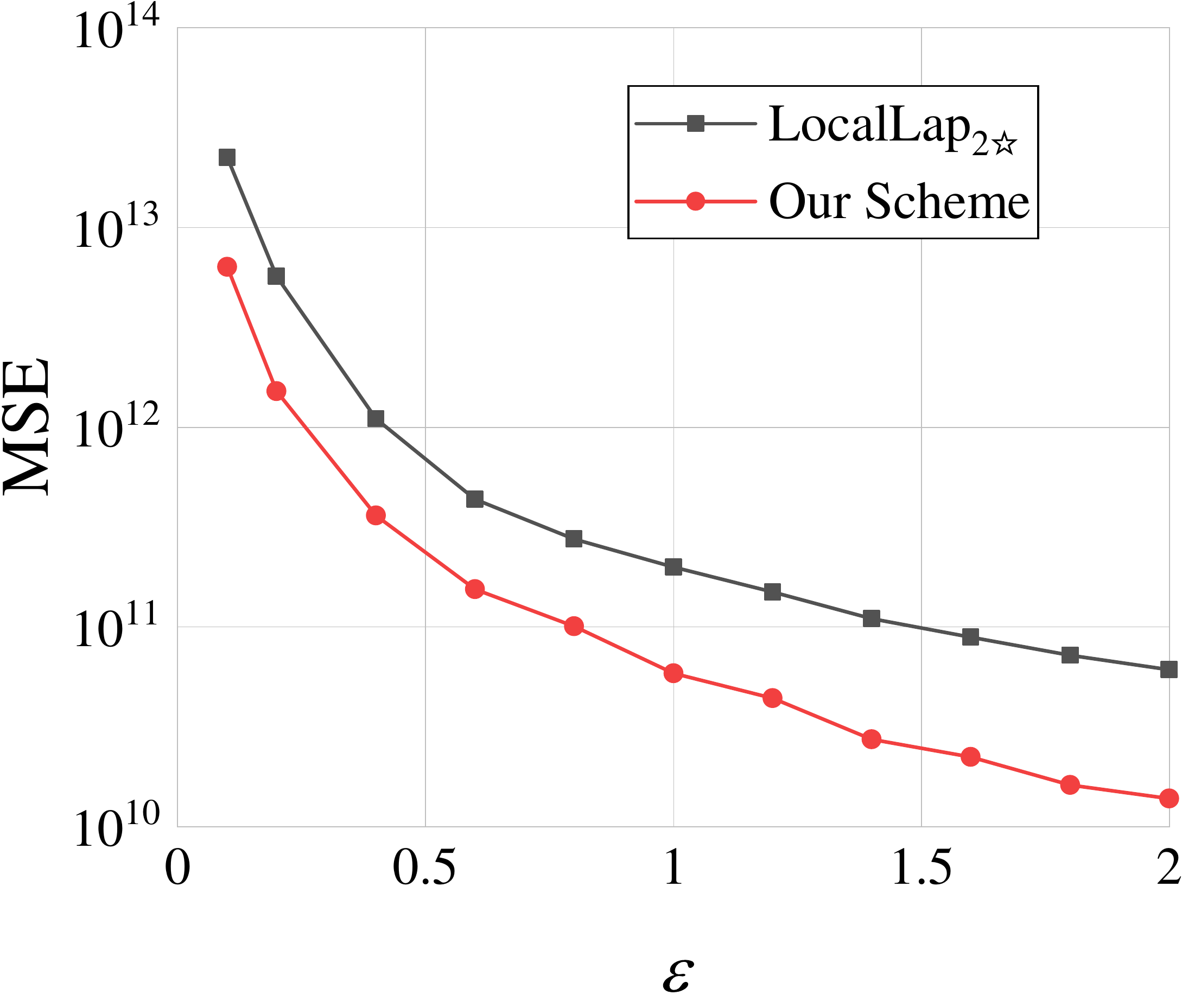}} 
  \subfigure[Orkut(2-star, $n=24000$)]{\includegraphics[width=4cm]{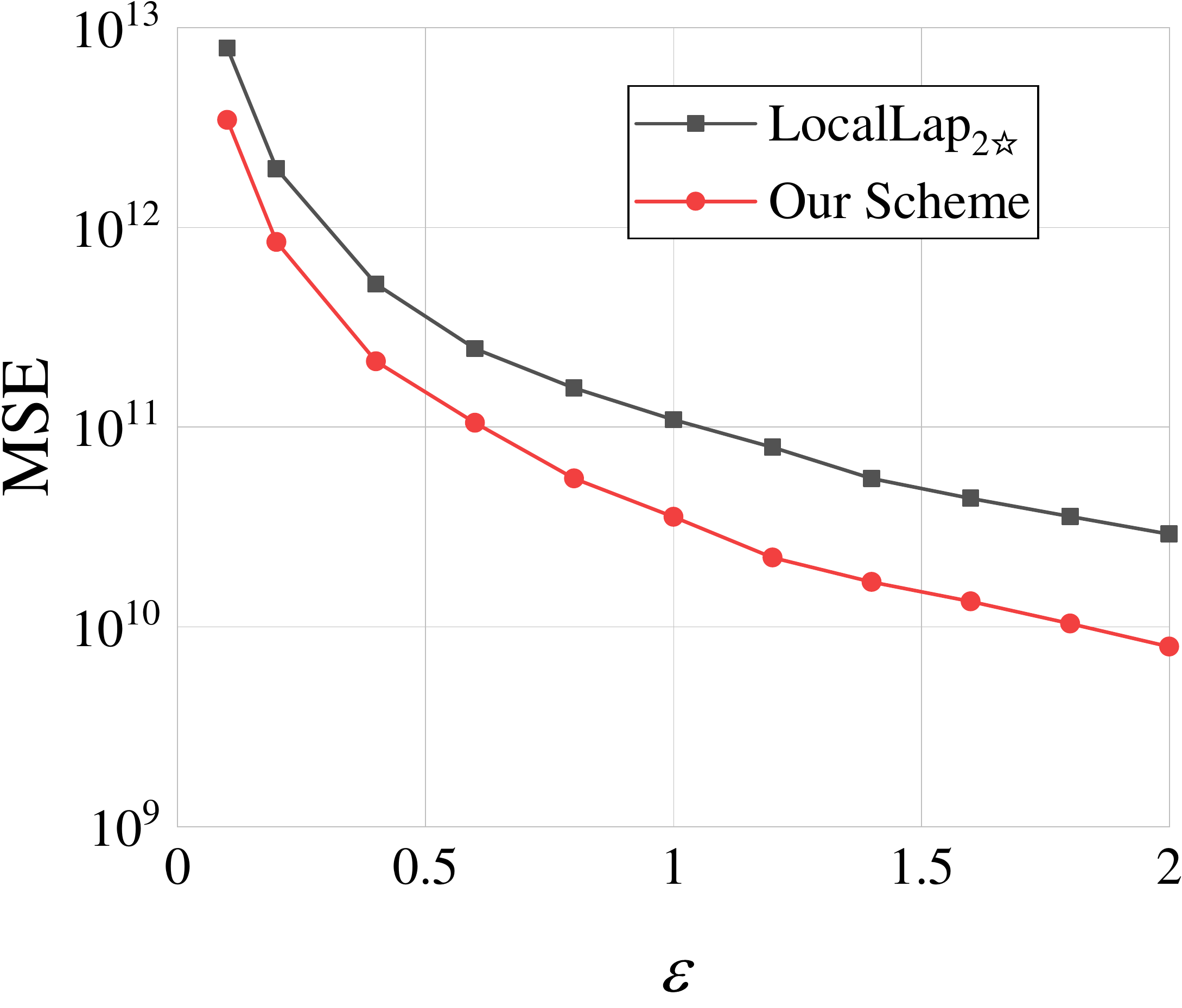}} 
  \caption{Relation between $\varepsilon$ and the MSE in 2-star counting when $n = 24000$}
\end{figure} 

\textbf{Utility Metrics.}
 Let $\hat{f}(G) \in \mathbb{R}$ be an estimate of global subgraph count $f(G) \in \mathbb{R}$, where $f$ can be instantiated by $f_\bigtriangleup$ or $f_\hollowstar$. We employ the mean squared error (MSE) and the mean relative error (MRE) as utility metrics to evaluate the accracy of our estimation, defined by
 \begin{equation}
     {\rm MSE} = \dfrac{(\hat{f}(G)-f(G))^2}{f(G)}, \quad {\rm MRE} = \dfrac{|\hat{f}(G)-f(G)|}{f(G)}
 \end{equation}
  where $f(G) \neq 0$. The smaller the MRE(MSE), the more accurate the estimated results are. All experimental results are averaged with 100 repeats.

\subsection{Experimental Results}
\textbf{Relation between $\varepsilon$ and the MSE.} We first evaluate the MSE of the estimates of $f_\bigtriangleup$, $f_{2\hollowstar}$ when we change the privacy budget $\varepsilon$. We omit the result of 3-stars because it is similar to that of 2-stars. Fig. 3 depicts MSE values output of triangle counts by our scheme and baseline described above when the privacy budget varies from 0.1 to 2. We set $\alpha=0.5$ because we value the adjacency matrix collection and noise counts collection equally in the triangle counting algorithm. The results show that our scheme achieves higher accuracy over all datasets. Note that the difference is clear since MSE is plotted in log-scale. By contrast, the proposed algorithm significantly outperforms the baseline method in terms of overall accuracy (the MSE is usually only one-fifth of the baseline). Similarly, we compare the MSE of our 2-stars collection algorithm with $LocalLap_{2 \hollowstar} $ over all datasets in Fig. 4. Fig.3 and 4 are roughly consistent with the upper bounds. Notice that all experiments have smaller MSE in Orkut than LJDB because the latter is more sparse than the former. 

\begin{figure}[!tb]
  \centering
    \subfigure[LJDB(triangle)]{\includegraphics[width=4cm]{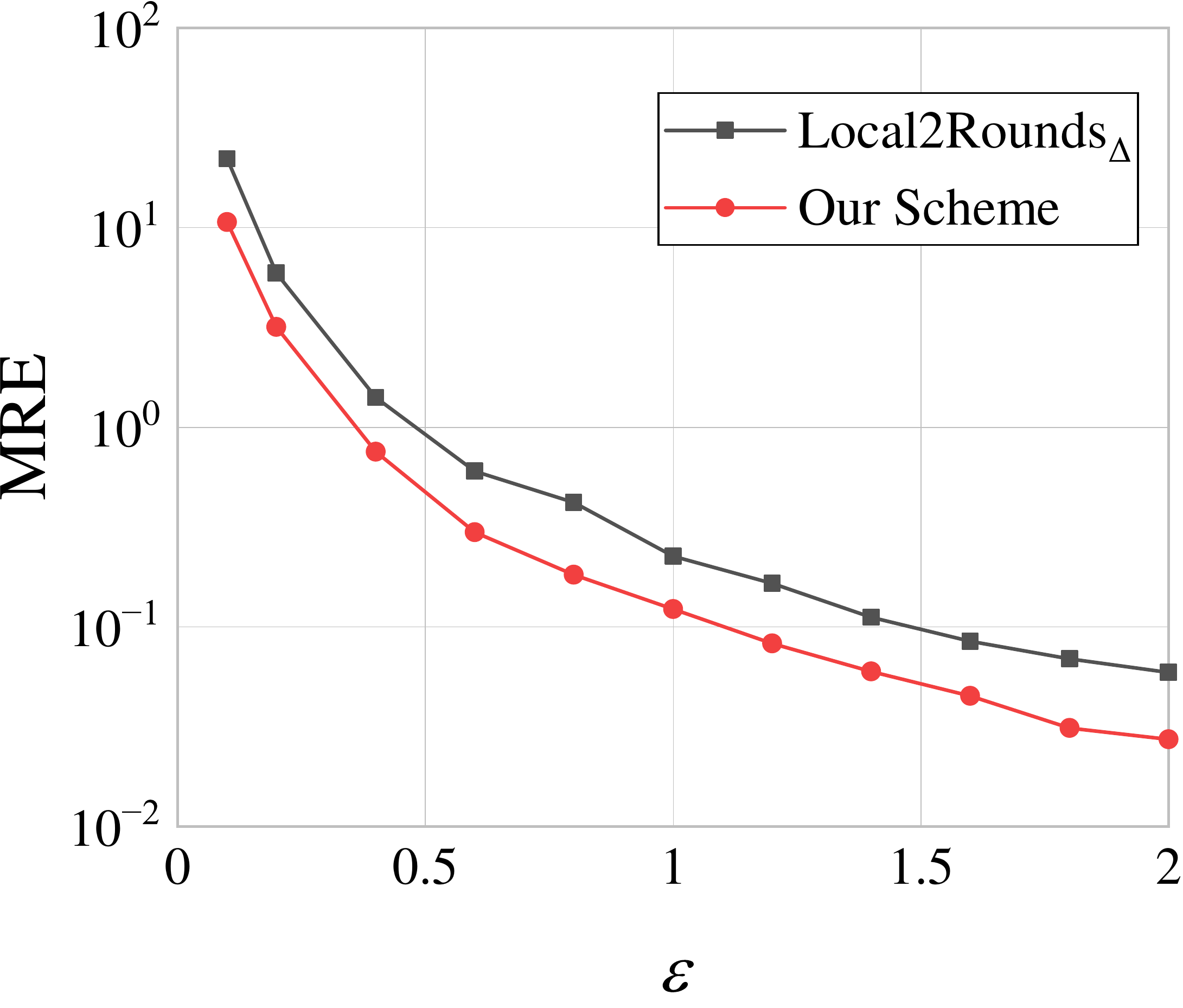}} 
    \subfigure[Orkut(triangle)]{\includegraphics[width=4cm]{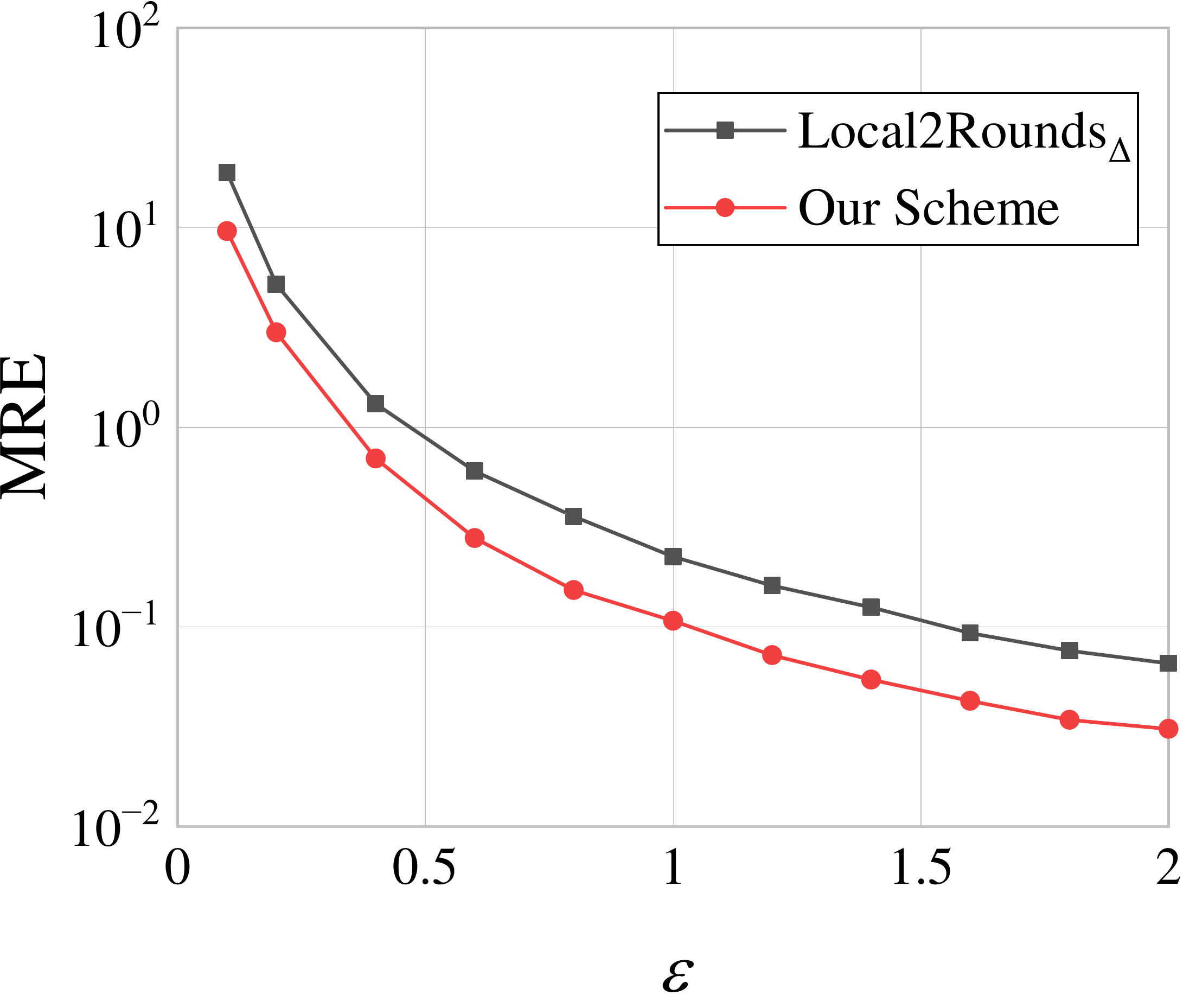}}\\
    \subfigure[LJDB(2-star, n=24000)]{\includegraphics[width=4cm]{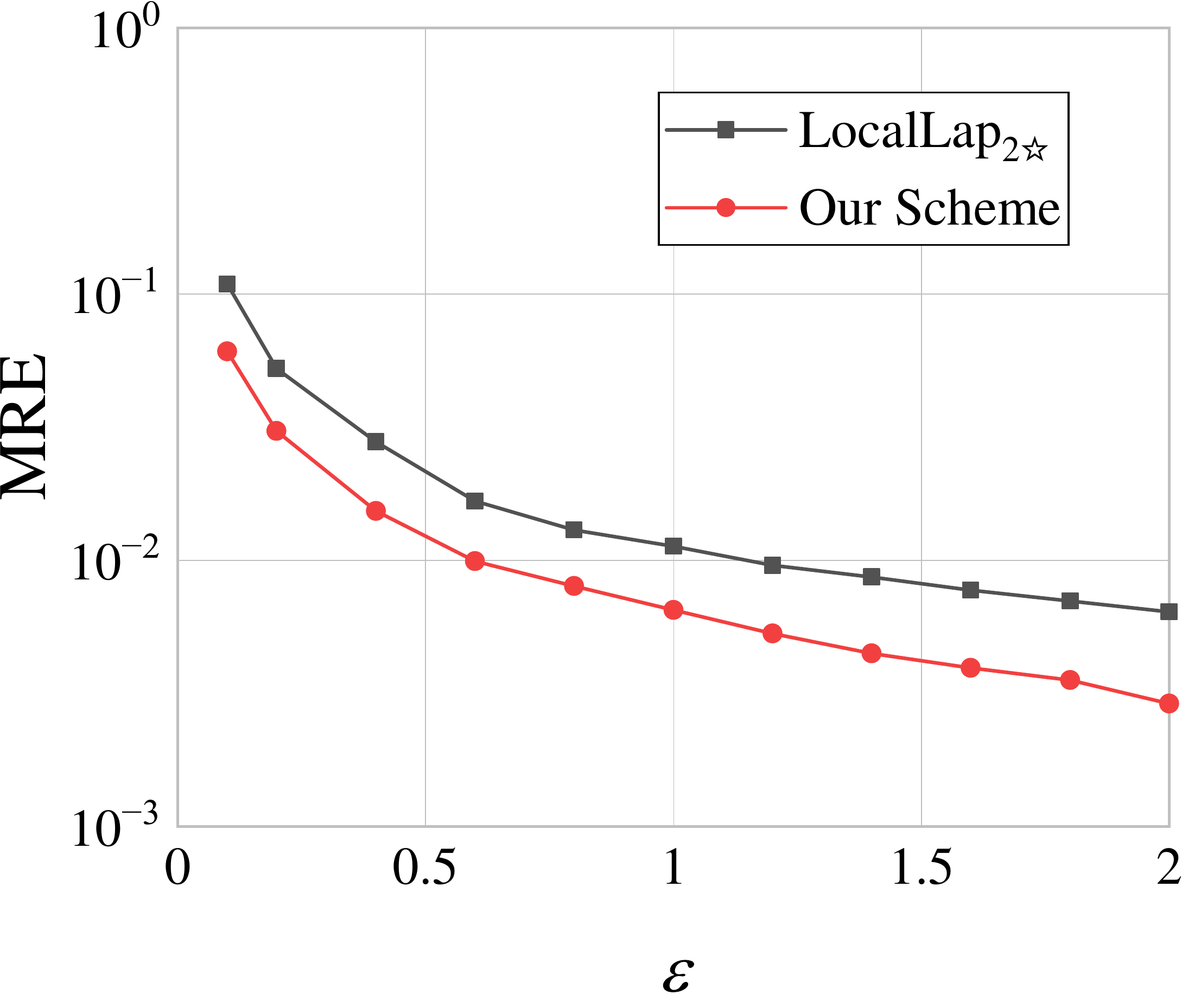}} 
    \subfigure[Orkut(2-star, n=24000)]{\includegraphics[width=4cm]{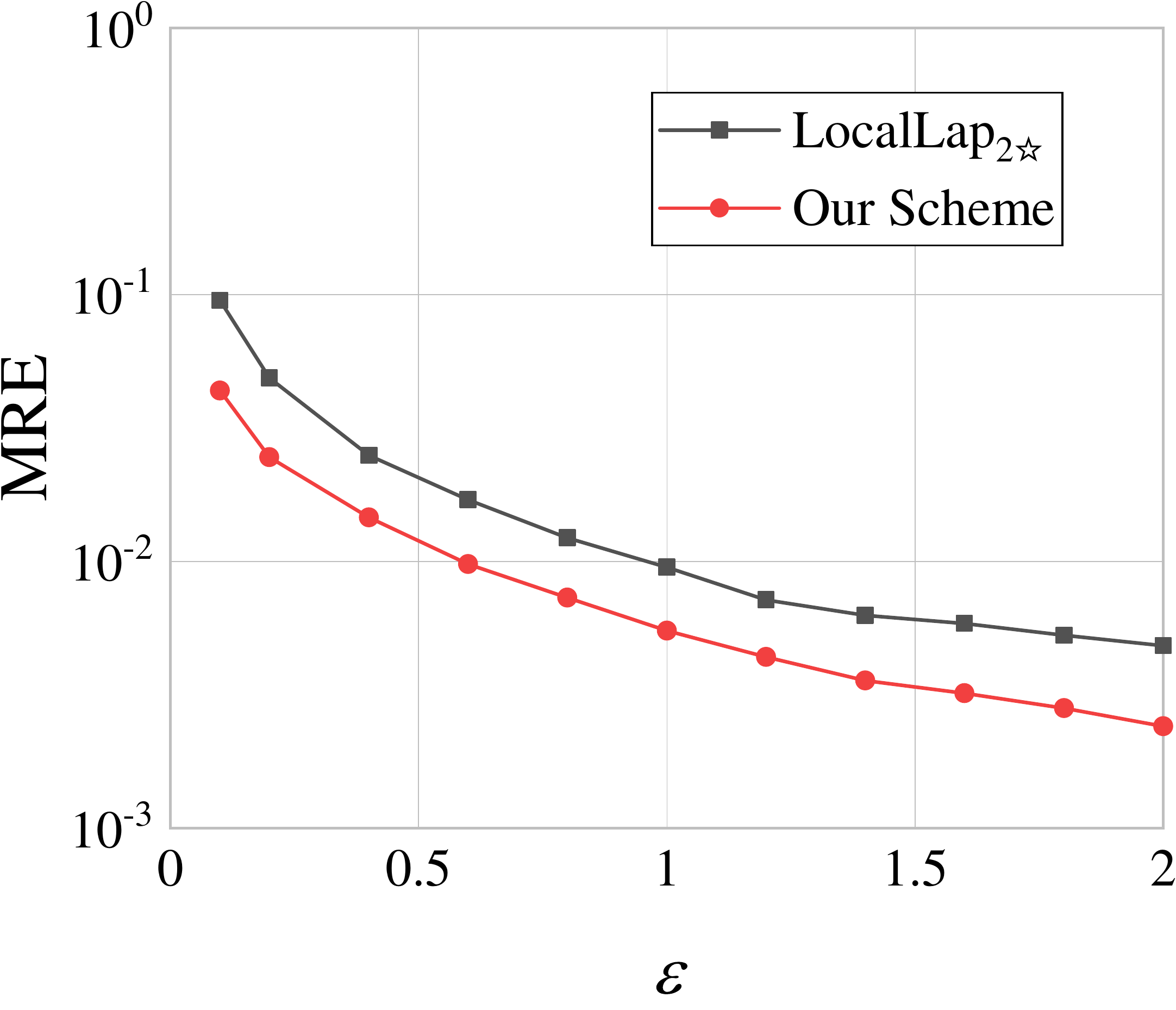}}\\
  \caption{Relation between $\varepsilon$ and the MRE.}
  \vspace{-0.2in}
\end{figure} 

\textbf{Relation between $\varepsilon$ and the MRE.} 
 When the number of subgraphs in the social graph is large, the MSE will also be enormous. Thus, we also employ the mean relative error (MSE) as our utility metrics, as described above. Fig. 5 shows the relation between $\varepsilon$ and the MRE for triangle counting and 2-stars counting. Again, the figure shows that our algorithms achieve better accuracy over all datasets. Similarly, we plot MRE in log-scale, decreasing gradually as the privacy budget increases. Obviously, the MRE always remains below 10\% in triangle counting regardless of datasets when the privacy budget is relatively large, e.g., $\varepsilon=1$, as illustrated in Fig. 5(a) and 5(b). This trend is more evident in the 2-stars counting algorithm. We can directly observe from Fig. 5(c) and 5(d) that when the privacy budget $\varepsilon$ = 1, its MRE is always below or close to 0.55\% over all datasets. When $\varepsilon$ decreases, the accuracy reduces, but the MRE is still lower than 4.9\% even when $\varepsilon=0.1$.

\begin{figure}[!tb]
  \centering
    \subfigure[LJDB($\varepsilon_1=0.5, \varepsilon_2=1$)]{\includegraphics[width=4cm]{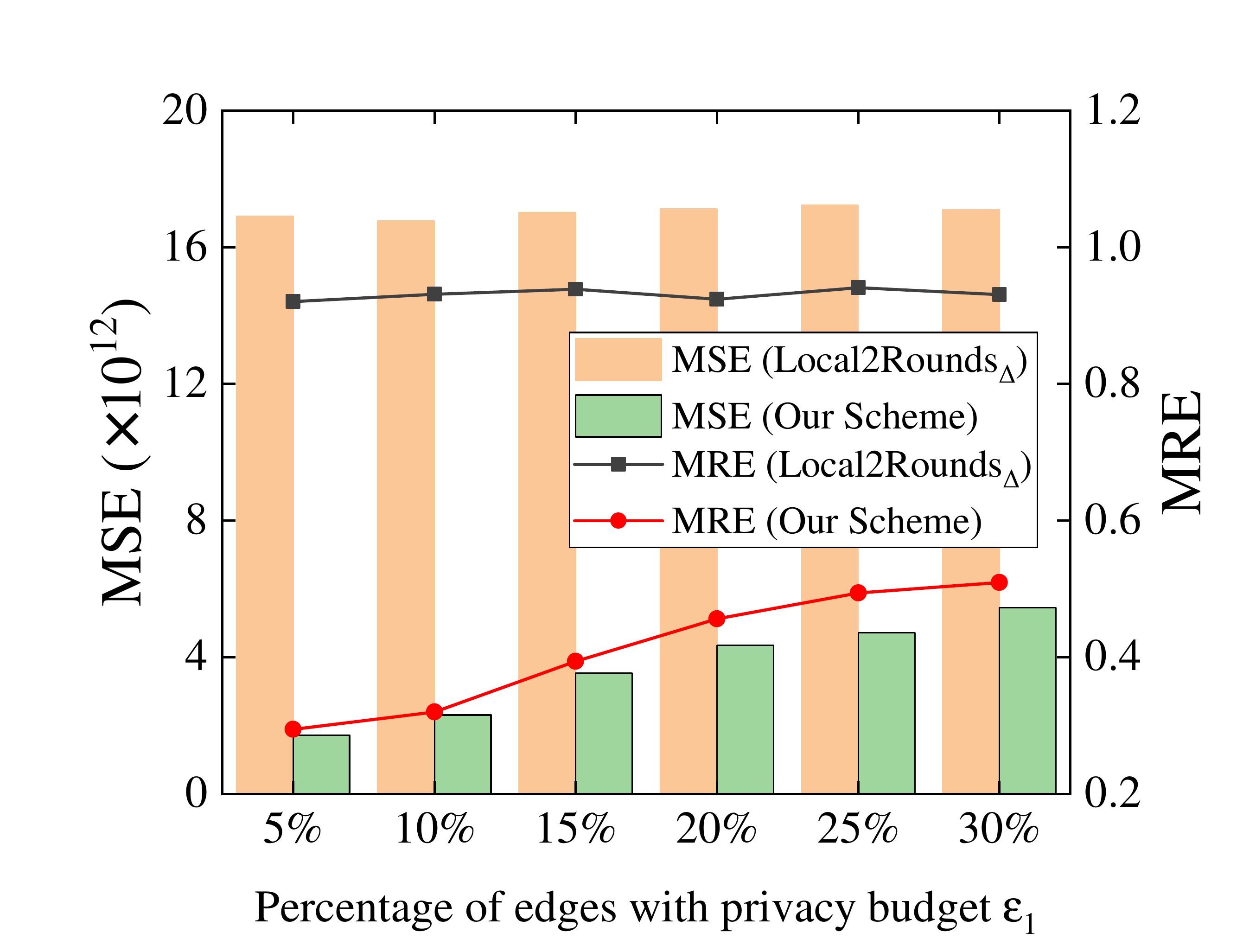}} 
    \subfigure[Orkut($\varepsilon_1=0.5, \varepsilon_2=1$)]{\includegraphics[width=4cm]{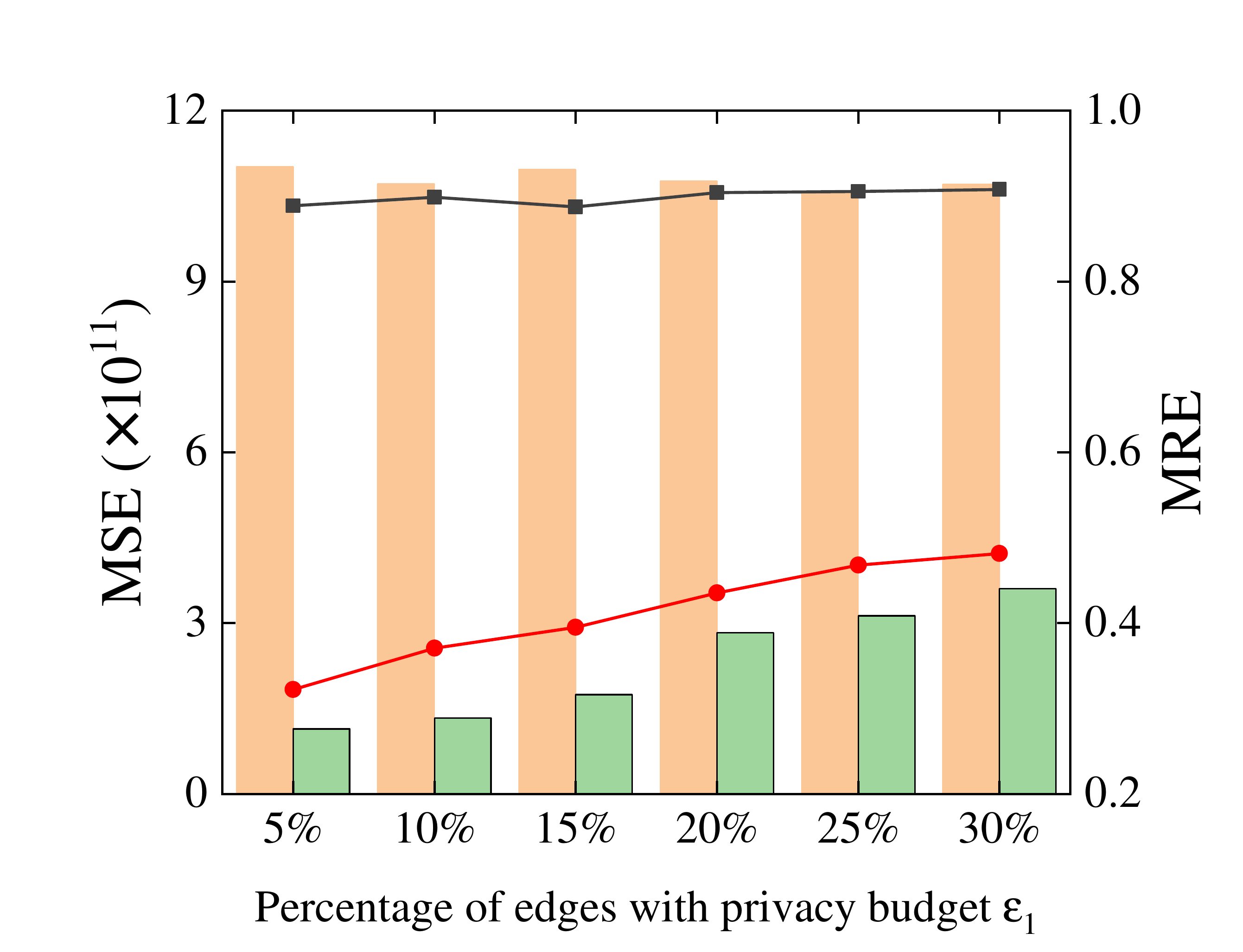}}\\
    \subfigure[LJDB($\varepsilon_1=1, \varepsilon_2=2$)]{\includegraphics[width=4cm]{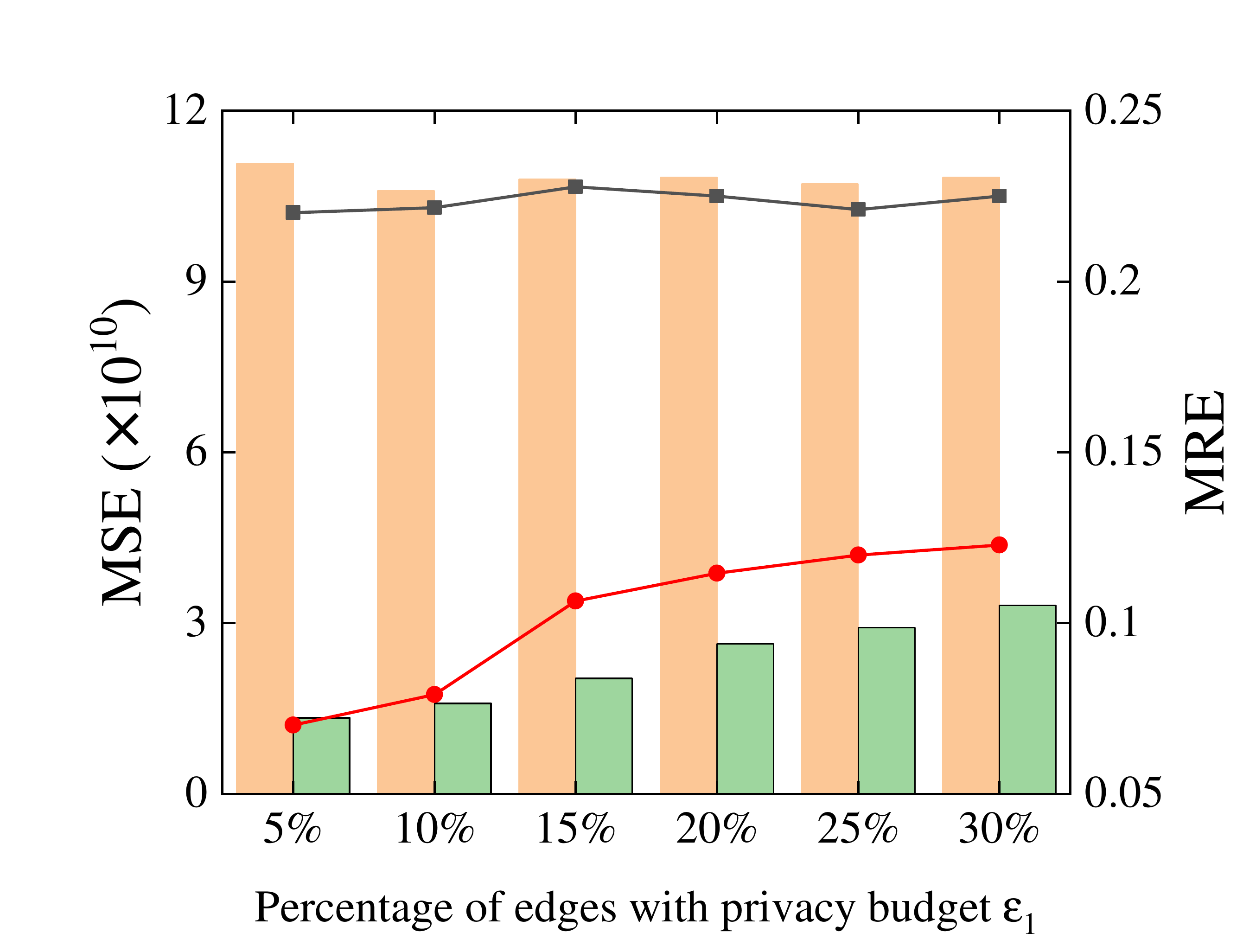}} 
    \subfigure[Orkut ($\varepsilon_1=1, \varepsilon_2=2$)]{\includegraphics[width=4cm]{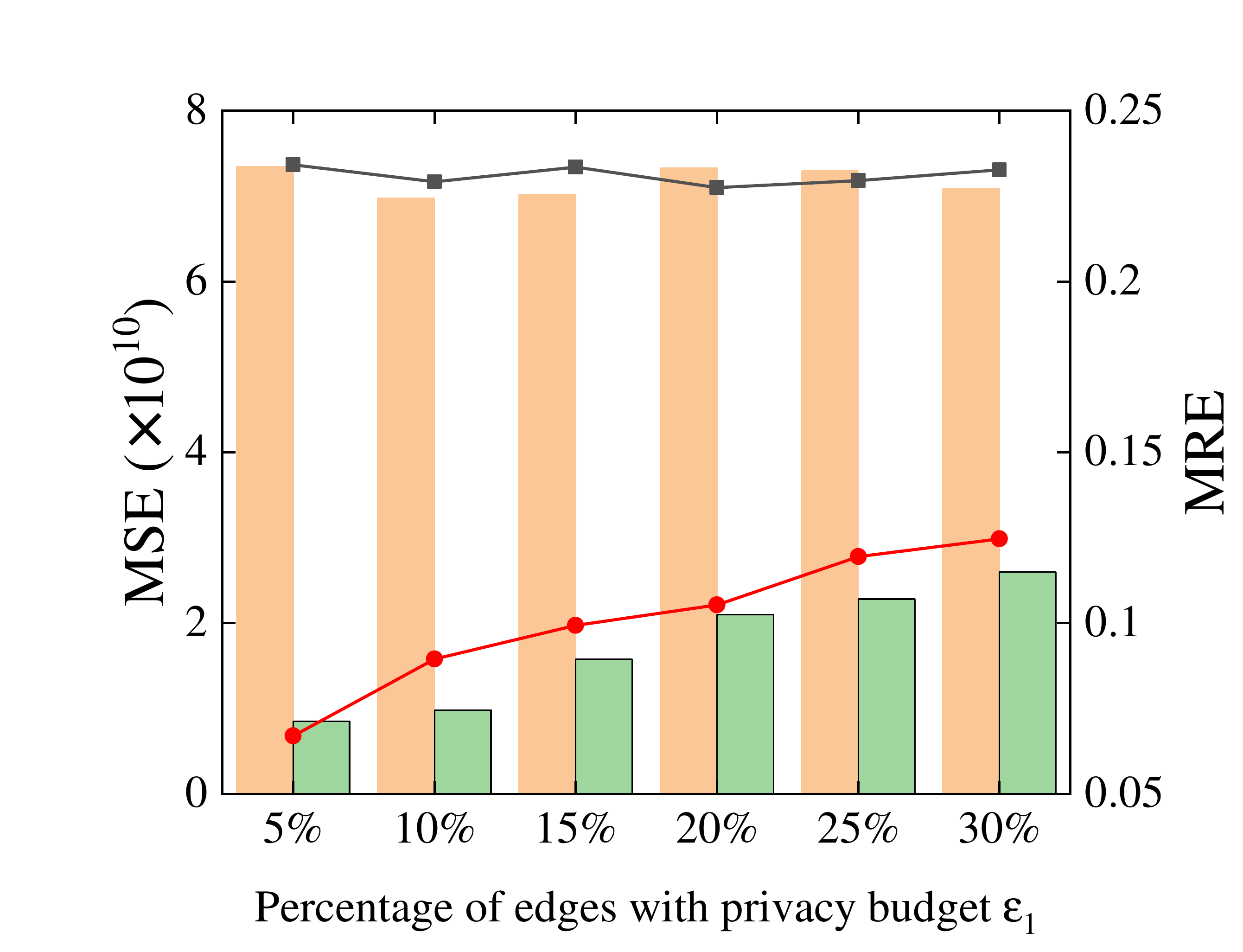}}\\
  \caption{Under different privacy budget distributions, when $n=10000$.}
\end{figure} 

\begin{figure}[!tb]
  \centering
    \subfigure[LJDB(triangle)]{\includegraphics[width=4cm]{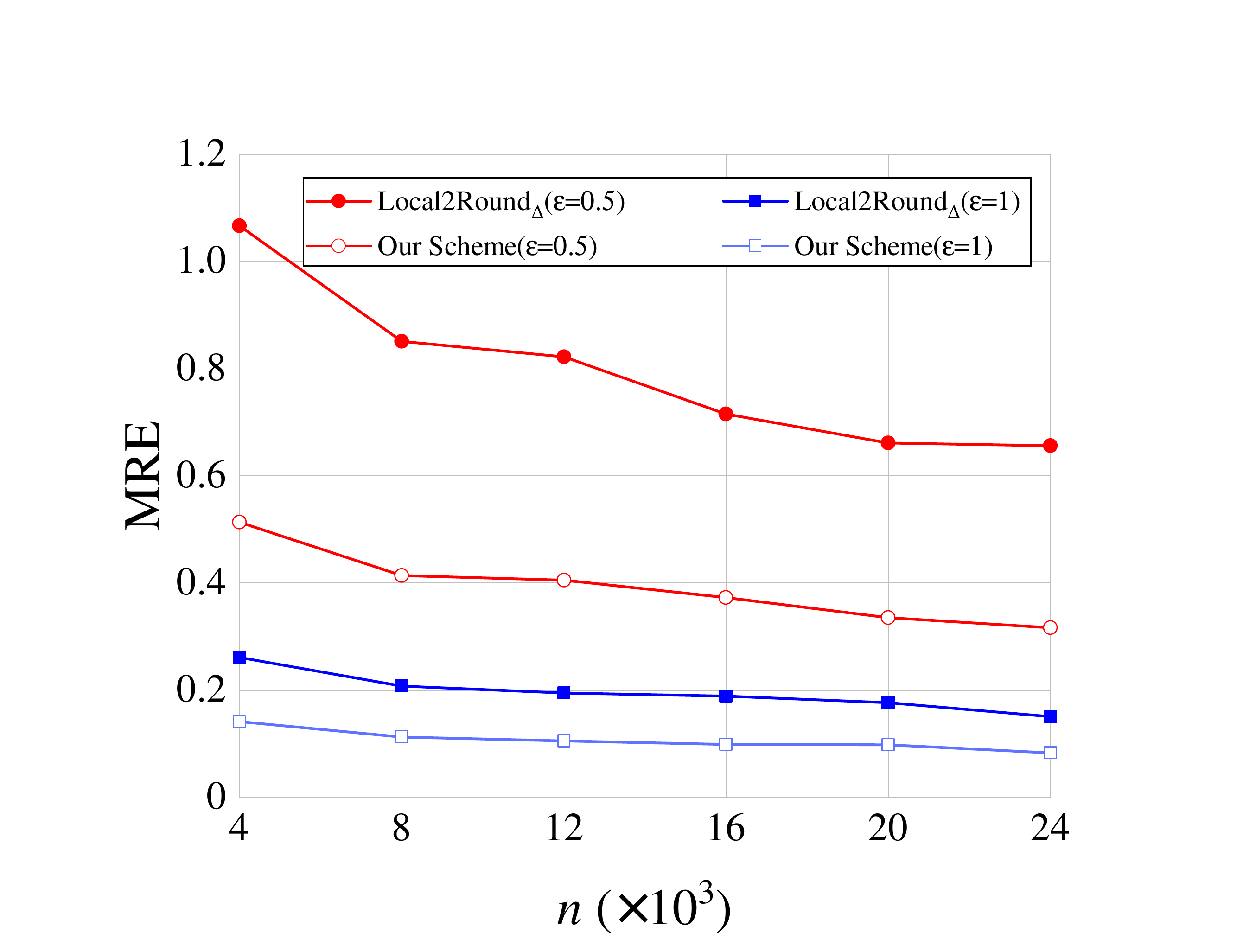}} 
  \subfigure[Orkut(triangle)]{\includegraphics[width=4cm]{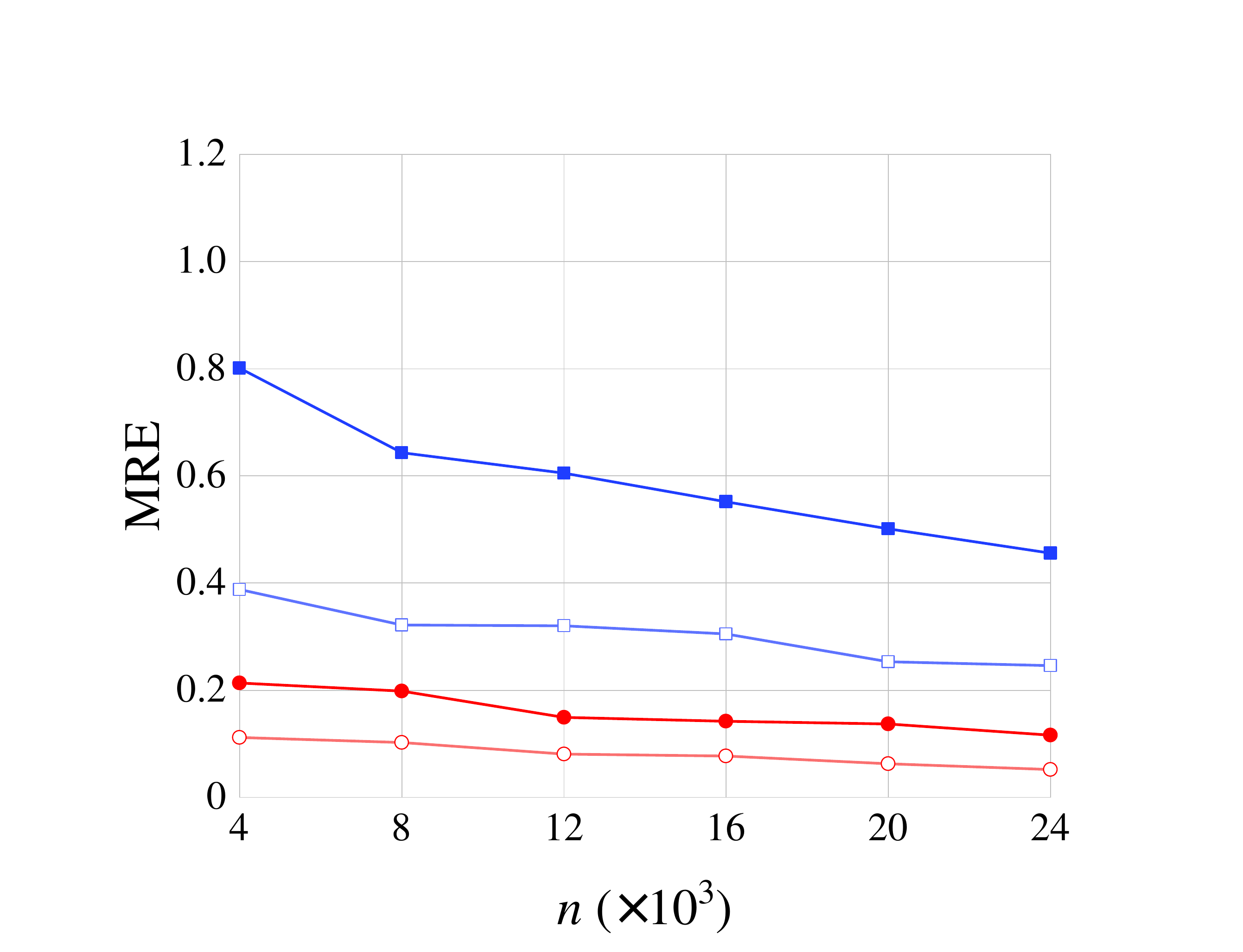}} 
  \caption{triangle Relation between $n$ and the relative error.}
\end{figure} 

\textbf{Influence of Privacy Budget Distributions.} 
Fig. 6 shows the MSE and MRE (with $n$=10000) under different privacy budget distributions in the triangle counting algorithm over all datasets. The $k$-stars result is not included in this case because it is similar to the triangle result. We change the percentage of edges whose privacy budget is $\varepsilon_1$ (the smaller one) from 5\% to 30\% with a 5\% increment. 
This setting is reasonable because the edges with high sensitivity levels usually account for a low percentage in the social graph. Under a relatively higher percentage, i.e., many edges are more sensitive than others (e.g., 20\%), our algorithm can also greatly improve the accuracy over all datasets. The error of our method gradually decreases as the percentage decreases. However, the baseline method always maintains a higher error constant.
Furthermore, we change $\varepsilon_1$ to observe the effect of the privacy budget on the results. Obviously, we can find that they share a similar trend in the Orkut dataset by comparing Fig. 6(b) and 6(d). Therefore, our algorithm achieves good accuracy regardless of the privacy budget.

\textbf{Relation between $n$ and the MRE.}
Fig. 7 and 8 describe the relation between $n$ and the MRE in triangle counting and $k$-stars counting, respectively. We can observe that the MRE decreases as $n$ increases for all cases because when $n$ increases, both $f_\bigtriangleup(G)$ and $f_\hollowstar(G)$ increase significantly. Another observation is that the MRE in Orkut is smaller since Orkut is denser and contains more triangles and $k$-stars; i.e., the denominator of the MRE is very large. What's more, for $k$-stars counting in Fig. 8, the MRE of our method at $\varepsilon=0.5$ is very close to that of $LocalLap_{2\hollowstar}$ at $\varepsilon=1$ in the case of the LJDB dataset, which fully demonstrates that the proposed algorithm can achieve higher accuracy with a smaller privacy budget.

\begin{figure}[!tb]
  \centering
    \subfigure[LJDB(2-star)]{\includegraphics[width=4cm]{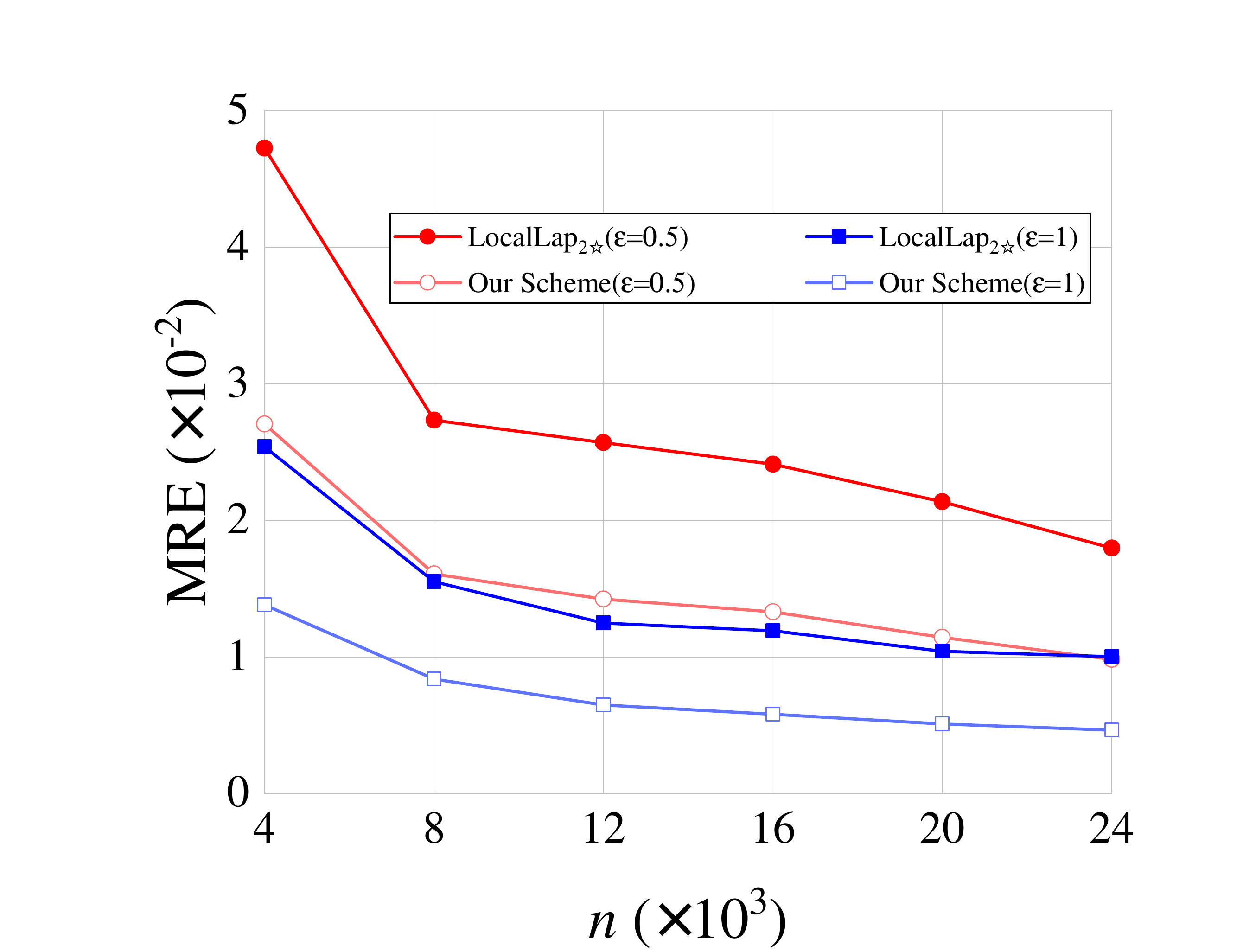}} 
    \subfigure[Orkut(2-star)]{\includegraphics[width=4.13cm]{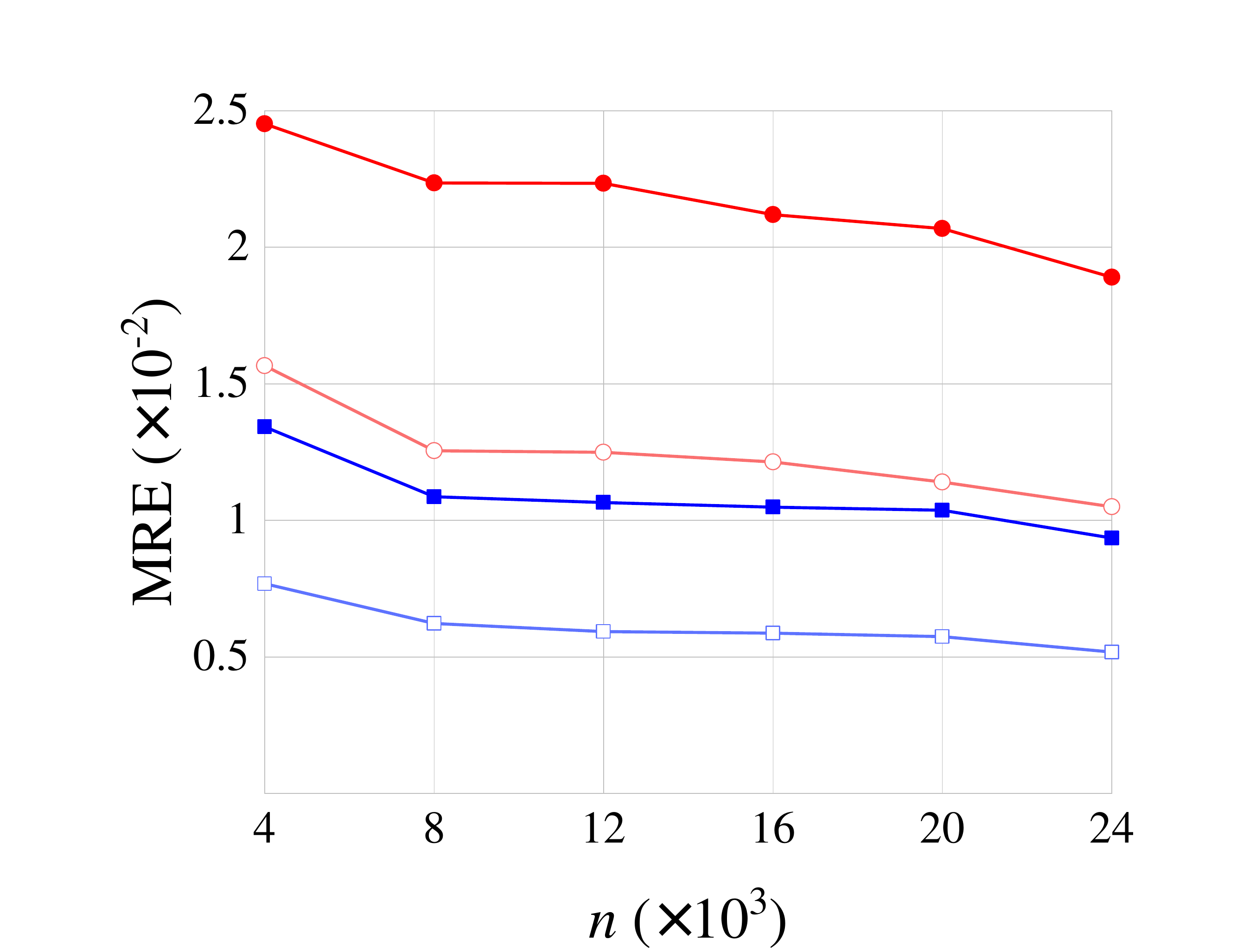}} \\
    \subfigure[LJDB(3-star)]{\includegraphics[width=4cm]{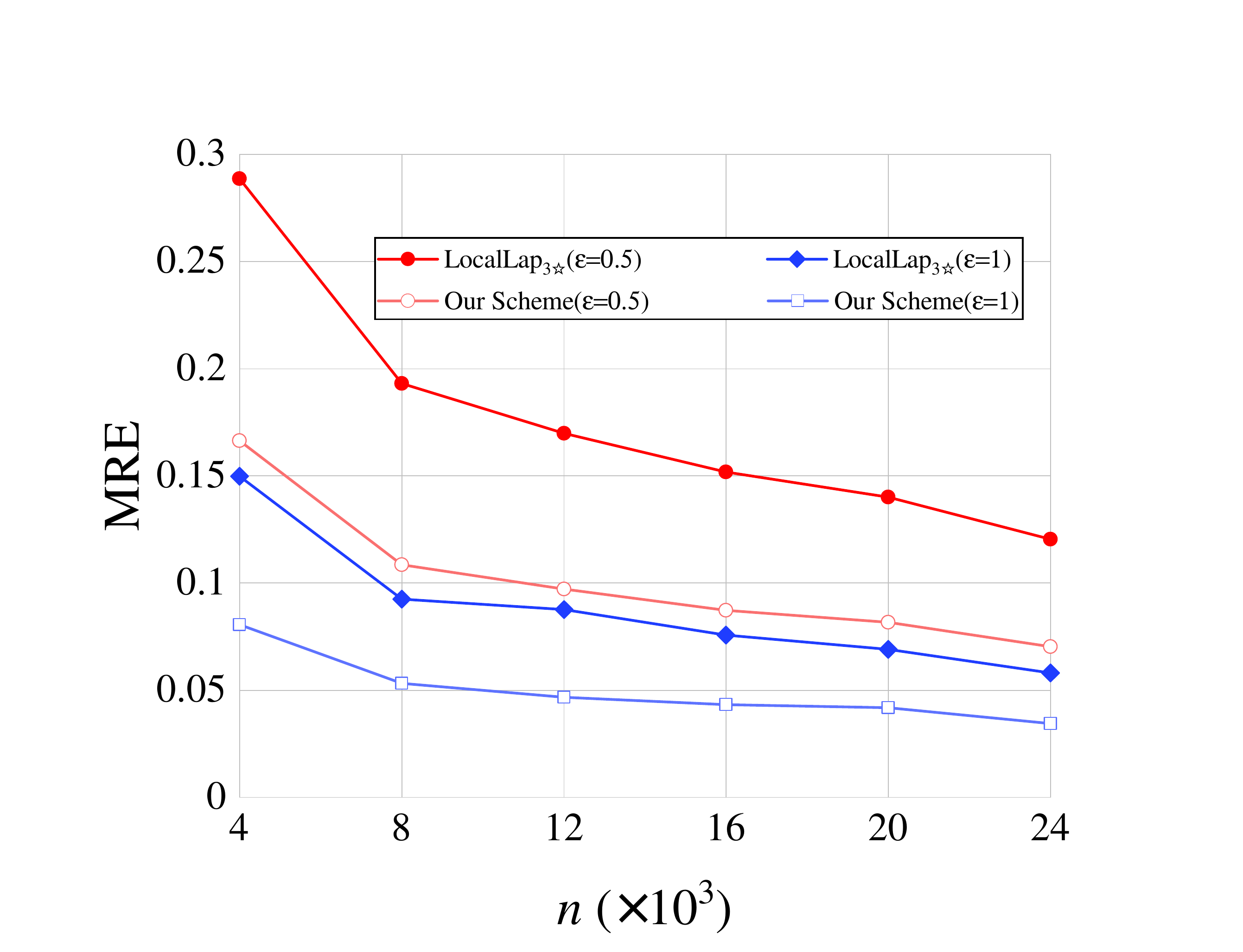}} 
    \subfigure[Orkut(3-star)]{\includegraphics[width=4cm]{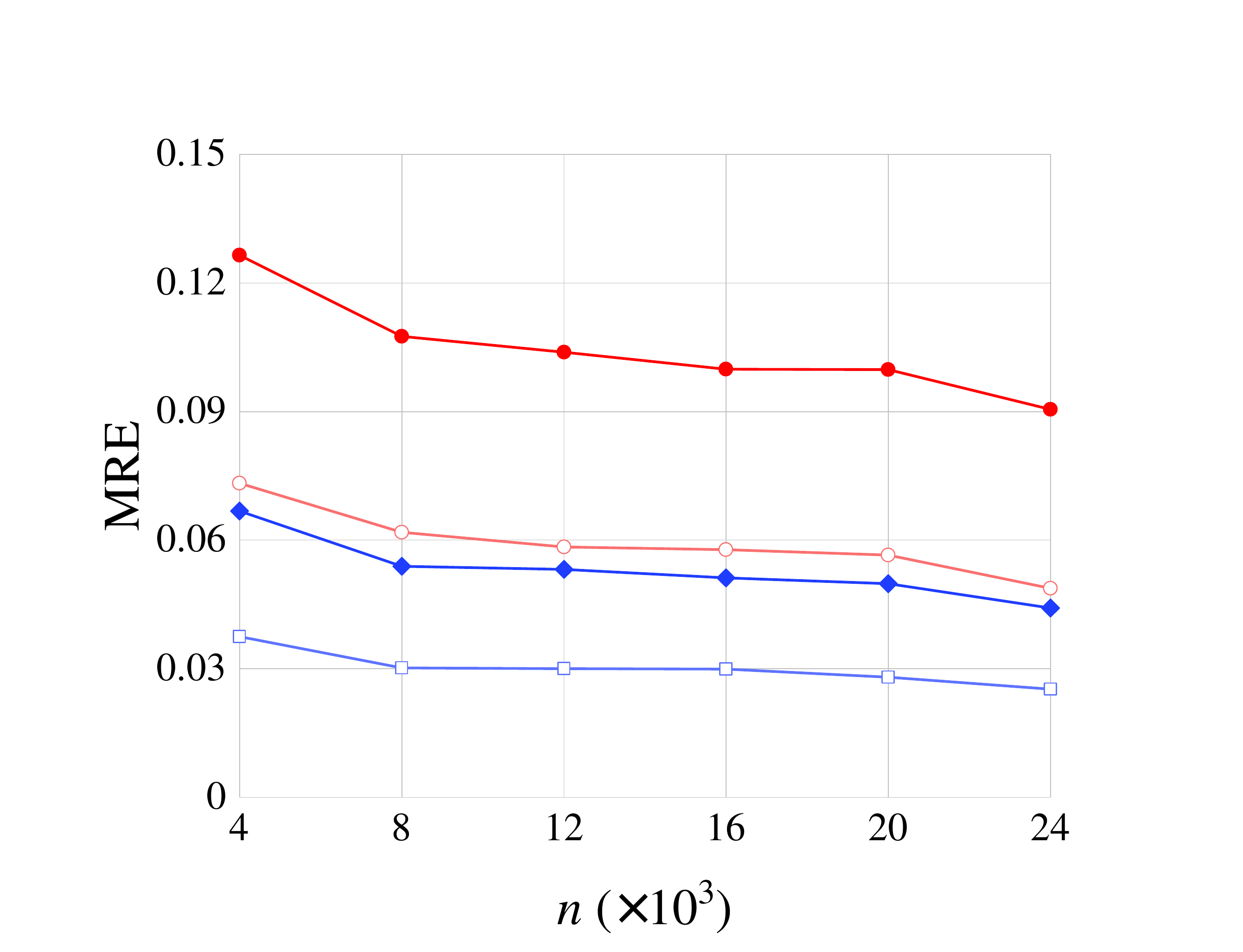}}
    \caption{Relation between $n$ and the MRE in $k$-stars counting }
\end{figure} 

\textbf{Summary of results.} In conclusion, a large number of experimental results show that the estimation error of subgraph counts can be greatly reduced under FGR-DP. As described in Section I, a unified protection strategy will not only overprotect the unimportant edges of the social graph, reducing the utility of graph analysis, but it will also cause issues such as insufficient protection of the core edges. 
We can provide fine-grained protection for edges with different privacy levels, which is the reason why our algorithm can significantly improve accuracy.

\section{Conclusions}\label{Conclusions}
This paper proposes a novel privacy definition called FGR-DP to provide fine-grained privacy graph analysis in decentralized social networks. Under FGR-DP, we design a privacy-preserving subgraph collection algorithm for $k$-stars counting and triangle counting, respectively, which can achieve better estimation accuracy with fine-grained privacy protection. Furthermore, 
we show how our algorithms are naturally expandable to multi-level privacy tasks. We then conduct comprehensive experiments on several real social graph datasets and show the superiority of the proposed algorithms. In the future, we will improve our algorithms to apply to more complex graph data analysis tasks.

\bibliographystyle{IEEEtran}
\bibliography{re.bib}
\end{document}